\newtheorem{definition}{Definition}[section]
\newtheorem{remark}{Remark}
\newtheorem{proposition}{Proposition}[section]
\newtheorem{example}{Example}[section]
\numberwithin{equation}{section}
\newcommand{\cE}{\mathcal{E}}
\newcommand{\cH}{\mathcal{H}}
\newcommand{\cF}{\mathcal{F}}
\newcommand{\cG}{\mathcal{G}}
\newcommand{\pr}{\mathbb{P}}
\newcommand{\re}{\mathbb{R}}
\newcommand{\B}{\mathbb{B}}
\newcommand{\EP}{\mathbb{E}}
\newcommand{\rd}{\textup{d}}
\newcommand{\hM}{\widehat{M}}
\newcommand{\indi}[1]{1\hspace{-.09cm}\textup{\textrm{l}}}
\newcommand{\nn}{\nonumber}
\begin{document}
\title{\bf Randomised Mixture Models for Pricing Kernels}
\author{Andrea Macrina$^{1}$\footnote{Corresponding author. E-mail: andrea.macrina@kcl.ac.uk}\; and Priyanka A. Parbhoo$^{1,\,2}$\\\\
{\normalsize $^{1\,}$Department of Mathematics, King's College London, London WC2R 2LS, UK}\\
{\normalsize $^{2\,}$School of Computational and Applied Mathematics,}\\ {\normalsize University of the Witwatersrand, Private Bag 3, WITS 2050, South Africa}}
\date{9 December 2011}
\maketitle
\vspace{-0.75cm}
\begin{abstract}
\noindent
Numerous kinds of uncertainties may affect an economy, e.g. economic, political, and environmental ones. We model the aggregate impact by the uncertainties on an economy and its associated financial market by randomised mixtures of L\'evy processes. We assume that market participants observe the randomised mixtures only through best estimates based on noisy market information. The concept of incomplete information introduces an element of stochastic filtering theory in constructing what we term ``filtered Esscher martingales''. We make use of this family of martingales to develop pricing kernel models. Examples of bond price models are examined, and we show that the choice of the random mixture has a significant effect on the model dynamics and the types of movements observed in the associated yield curves. Parameter sensitivity is analysed and option price processes are derived. We extend the class of pricing kernel models by considering a weighted heat kernel approach, and develop models driven by mixtures of Markov processes.
\\\vspace{-0.2cm}\\
{\bf Keywords:} Pricing kernel, asset pricing, interest rate modelling, yield curve, randomised mixtures, L\'evy processes, Esscher martingales, weighted heat kernel, Markov processes.
\end{abstract}


\section{Introduction}
In this paper, we develop interest rate models that offer consistent dynamics in the short, medium, and long term. Often interest rate models have valid dynamics in the short term, that is to say, over days or perhaps a few weeks. Such models may be appropriate for the pricing of securities with short time-to-maturity. For financial assets with long-term maturities, one requires  interest rate models with plausible long-term dynamics, which retain their validity over years. Thus the question arises as to how one can create interest rate models which are sensitive to market changes over both short and long time intervals, so that they remain useful for the pricing of securities of various tenors. Ideally, one would have at one's disposal interest rate models that allow for consistent pricing of financial instruments expiring within a range of a few minutes up to years, and if necessary over decades. One can imagine an investor holding a portfolio of securities maturing over various periods of time, perhaps spanning several years. Another situation requiring interest rate models that are valid over short and long terms, is where illiquid long-term fixed-income assets need to be replicated with (rolled-over) liquid shorter-term derivatives. Here it is central that the underlying interest rate model possesses consistent dynamics over all periods of time in order to avoid substantial hedging inaccuracy.  Insurance companies, or pension funds, holding liabilities over decades might have no other means but to invest in shorter-term derivatives, possibly with maturities of months or a few years, in order to secure enough collateral for their long-term liabilities reserves. Furthermore, such hedges might in turn need second-order liquid short-term protection, and so forth. Applying different interest rate models validated for the various investment periods, which frequently do not guarantee price and hedging consistency, seems undesirable. Instead, we propose a family of pricing kernel models which may generate interest rate dynamics sufficiently flexible to allow for diverse behaviour over short, medium and long periods of time.

We imagine economies, and their associated financial markets, that are exposed to a variety of uncertainties, such as economic, social, political, environmental, or demographic ones.  We model the degree of impact of these underlying factors on an economy (and financial markets) at each point in time by combinations of continuous-time stochastic processes of different probability laws. When designing interest rate models that are sensitive to the states an economy may take, subject to its response to the underlying uncertainty factors, one may wonder a) how many stochastic factor processes ought to be considered, and b) what is the combination, or mixture, of factor processes determining the dynamics of an economy and its associated financial market.  It is plausible to assume that the number of stochastic factors and their combined impact on a financial market continuously changes over time, and thus that any interest rate model designed in such a set-up is by nature time-inhomogeneous. The recipe used to construct interest-rate models within the framework proposed in this paper can be summarised as follows: 
\begin{enumerate}
\item [(i)] Assume that the response of a financial market to uncertainty is modelled by a family of stochastic processes, e.g. Markov processes.

\item [(ii)] Consider a mixture of such stochastic processes as the basic driver of the resulting interest rate models. 

\item [(iii)] In order to explicitly design interest rate models, apply a method for the modelling of the pricing kernel associated with the economy, which underlies the considered financial market. 

\item [(iv)] Derive the interest rate dynamics directly from the pricing kernel models, or, if more convenient, deduce the interest rate model from the bond price process associated with the constructed pricing kernel. 
\end{enumerate}

The set of stochastic processes chosen to model an economy's response to uncertainty, the particular mixture of those, and the pricing kernel model jointly characterize the dynamics of the derived interest rate model. We welcome these degrees of freedom, for any one of them may abate the shortcoming (or may amplify the virtues) of another. For example, one might be constrained to choose L\'evy processes to model the impact of uncertainty on markets. The fact that L\'evy processes are time-homogeneous processes with independent increments, might be seen as a disadvantage for modelling interest rates for long time spans. However, a time-dependent pricing kernel function may later introduce time-inhomogeneity in the resulting interest rate model. The choice of a certain set of stochastic processes implicitly determines a particular joint law of the modelled market response to the uncertainty sources. Although the resulting multivariate law may not coincide well with the law of the combined uncertainty impact, the fact that we can directly model a particular mixture of stochastic processes provides the desirable degree of freedom in order to control the dynamical law of the market's response to uncertainty. In this paper, we consider ``randomised mixing functions'' for the construction of multivariate interest rate models with distinct response patterns to short-, medium-, and long-term uncertainties. Having a randomised mixing function enables us to introduce the concept of ``partially-observable mixtures'' of stochastic processes. We take the view that market agents cannot fully observe the actual combination of processes underlying the market. Instead they form best estimates of the randomised mixture given the information they possess; these estimates are continuously updated as time elapses. This feature introduces a feedback effect in the constructed pricing models.   

The reason why we prefer to propose pricing kernel models in order to generate the dynamics of interest rates, as opposed to modelling the interest rates directly, is that the modelling of the pricing kernel offers an integrated approach to equilibrium asset pricing in general (see Cochrane \cite{coc}, Duffie \cite{duf}), including risk management and thus the quantification of risk involved in an investment. The pricing kernel includes the \textit{quantified} total response to the uncertainties affecting an economy or, in other words, the risk premium asked by an investor as an incentive for investing in risky assets. In this work we first consider a particular family of pricing kernel models, namely the Flesaker-Hughston class (see Flesaker \& Hughston \cite{fh}, Hunt \& Kennedy \cite{hk}, Cairns \cite{cai}, Brigo \& Mercurio \cite{bm}). Since our goal in this paper is to primarily introduce a framework capable of addressing issues arising in interest rate modelling over short to long term time intervals, we apply our ideas first to the Flesaker-Hughston class of pricing kernels. We conclude the paper by introducing randomised weighted heat kernel models, along the lines of Akahori {\it et al}.~\cite{ahtt} and Akahori \& Macrina \cite{am}, which extend the class of pricing kernels developed in the first part of this paper.   


\section{Randomised Esscher martingales}
We begin by introducing the mathematical tools that we shall use to construct pricing kernel models based on randomised mixtures of L\'evy processes.
We fix a probability space $(\Omega, \cF, \pr)$ where $\pr$ denotes the real probability measure.

\begin{definition}\label{RandEssMart}
Let $\{L_t\}_{t\geq 0}$ be an $n$-dimensional L\'evy process with independent components, and let $X:\Omega\rightarrow\re^m$ be an independent, $m$-dimensional vector of random variables. For $t,u\in\re_+$, the process $\{M_{tu}(X)\}$ is defined by
\begin{equation}\label{martingale}
M_{tu}(X)=\frac{\exp{\left(h(u, X)L_t\right)}}{\EP\left[\exp{\left(h(u,X)L_t\right)}\,|\,X\right]},
\end{equation}
where the function $h:\re_+\times\re^m\rightarrow\re^n$ is chosen such that $\EP\left[\,|\,M_{tu}(X)\,|\,\right]<\infty$ for all $t\in \re_+$.
\end{definition}

\begin{proposition}
Let the filtration $\{\cH_t\}_{t\geq 0}$ be given by $\cH_t=\sigma\left(\{L_s\}_{0\le s\le t},X\right)$.
Then the process $\{M_{tu}(X)\}$ is an $(\{\cH_t\},\pr)$-martingale.
\end{proposition}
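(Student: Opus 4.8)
\medskip
\noindent\textbf{Proof proposal.}
The plan is to check the three defining properties of a martingale: $\cH_t$-measurability of $M_{tu}(X)$, $\pr$-integrability, and the identity $\EP[M_{tu}(X)\,|\,\cH_s]=M_{su}(X)$ for $0\le s\le t$. Integrability is immediate, being exactly the condition imposed on $h$ in Definition~\ref{RandEssMart}. For adaptedness, the key remark is that, since $X$ is independent of $\{L_t\}$, the conditioning on $X$ in the denominator of \eqref{martingale} is a ``freezing'' operation: writing $\varphi_t(\theta):=\EP[\exp(\theta L_t)]$ for the (possibly infinite) moment generating function of $L_t$ at $\theta\in\re^n$, one has $\EP[\exp(h(u,X)L_t)\,|\,X]=\varphi_t(h(u,X))$ $\pr$-a.s. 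Hence $M_{tu}(X)$ is a measurable function of $(L_t,X)$ and therefore $\cH_t$-measurable; I would also note at this point that the denominator is a.s.\ strictly positive (it is the expectation of a positive random variable) and a.s.\ finite (this is where the integrability hypothesis on $h$ is used), so that $M_{tu}(X)$ is well defined.

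For the martingale identity I would fix $0\le s\le t$ and evaluate $\EP[M_{tu}(X)\,|\,\cH_s]$ in stages. The denominator $\varphi_t(h(u,X))$ is $\sigma(X)$-measurable, hence $\cH_s$-measurable, and comes outside the conditional expectation. Writing $L_t=L_s+(L_t-L_s)$ gives $\exp(h(u,X)L_t)=\exp(h(u,X)L_s)\exp(h(u,X)(L_t-L_s))$, and the first factor, being $\cH_s$-measurable, also comes out. The increment $L_t-L_s$ is independent of $\{L_r\}_{0\le r\le s}$ (independent increments of $L$) and of $X$ (the standing independence assumption), hence independent of $\cH_s$, so that $\EP[\exp(h(u,X)(L_t-L_s))\,|\,\cH_s]=\psi_{s,t}(h(u,X))$ $\pr$-a.s., where $\psi_{s,t}(\theta):=\EP[\exp(\theta(L_t-L_s))]$. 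Independence of increments also gives the deterministic factorisation $\varphi_t(\theta)=\varphi_s(\theta)\,\psi_{s,t}(\theta)$ for every $\theta$; substituting $\theta=h(u,X)$ and cancelling the factor $\psi_{s,t}(h(u,X))$ then leaves
\[
\EP[M_{tu}(X)\,|\,\cH_s]=\frac{\exp(h(u,X)L_s)}{\varphi_s(h(u,X))}=M_{su}(X),
\]
which is the claim.

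The argument uses only the independence of the increments of $L$ and the independence of $X$ from $L$; neither stationarity of the increments nor the componentwise independence of $L$ is needed here. I do not anticipate a real obstacle. The only point that calls for care is the bookkeeping of the conditional expectations given the independent $\sigma$-field generated by $X$ (the ``freezing'' and tower steps) together with the a.s.\ finiteness and strict positivity of the factors $\varphi_s(h(u,X))$ and $\psi_{s,t}(h(u,X))$: positivity is automatic, while finiteness of each follows from that of $\varphi_t(h(u,X))$ via the factorisation $\varphi_t=\varphi_s\,\psi_{s,t}$, so that the cancellation carried out above is legitimate.
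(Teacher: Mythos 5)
Your proof is correct and follows essentially the same route as the paper: pull out the $\sigma(X)$-measurable denominator, split $L_t=L_s+(L_t-L_s)$, and use independence of the increment from $\cH_s$ together with the factorisation of the conditional moment generating function to cancel the increment's contribution. Your explicit use of $\varphi_t=\varphi_s\,\psi_{s,t}$ and the remarks on a.s.\ finiteness and positivity of the factors are just a more careful rendering of the paper's step ``given $X$, the expectation in the denominator factorizes,'' not a different argument.
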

We note that $X$ is $\cH_0$-measurable and therefore, that $\{\cH_t\}$ is an initial enlargement of the natural filtration of $\{L_t\}$ by the random variable $X$. Furthermore, $M_{0u}(X)=1$ and $M_{tu}(X)> 0$ for all $t,u\in\re_+$.

\medskip  
\begin{proof}
The condition that $\EP\left[\,|\,M_{tu}(X)\,|\,\right]$ be finite for all $0\le t<\infty$ is ensured by definition. It remains to be shown that 
\begin{equation}
\EP\left[M_{tu}(X)\,\vert\,\cH_s\right]=M_{su}(X)
\end{equation}
for all $0\le s\le t <\infty$. We observe that the denominator in (\ref{martingale}) is $\cH_0$-measurable so that we can write
\begin{equation}
\EP\left[M_{tu}(X)\,|\,\cH_s\right] = \frac{\EP\left[\exp{\left(h(u, X)L_t\right)}\,|\,\cH_s\right]}{\EP \left[\exp{\left(h(u,X)L_t\right)}\,|\,X\right]}. 
\end{equation}
Next we expand the right-hand-side of the above equation to obtain
\begin{equation}
\frac{\EP\left[\exp\left[h(u,X)(L_t-L_s)\right]\exp\left[h(u,X)L_s\right]\,\vert\,\cH_s\right]}{\EP\left[\exp\left[h(u,X)(L_t-L_s)\right]\exp\left[h(u,X)L_s\right]\,\vert\,X\right]}.
\end{equation}
Given $X$, the expectation in the denominator factorizes since $L_t-L_s$ is independent of $L_s$. In addition, the factor $\exp[h(u,X)L_s]$ is $\cH_s$-measurable so that we may write
\begin{equation}\label{fact-mart}
\EP\left[M_{tu}(X)\,|\,\cH_s\right] = \frac{\exp\left[h(u,X)L_s\right]}{\EP\left[\exp\left[h(u,X)L_s\right]\,\vert\,X\right]}\ \frac{\EP\left[\exp\left[h(u,X)(L_t-L_s)\right]\,\vert\,\cH_s\right]}{\EP\left[\exp\left[h(u,X)(L_t-L_s)\right]\,\vert\,X\right]}.
\end{equation}
Since the increment $L_t-L_s$ and $X$ are independent of $L_s$, the $\cH_s$-conditional expectation reduces to an expectation conditional on $X$. Thus, equation (\ref{fact-mart}) simplifies to
\begin{equation}
\EP\left[M_{tu}(X)\,|\,\cH_s\right] = \frac{\exp\left[h(u,X)L_s\right]}{\EP\left[\exp\left[h(u,X)L_s\right]\,\vert\,X\right]},
\end{equation}
which is $M_{su}(X)$. 
\end{proof}
\medskip

\indent We call the family of processes $\{M_{tu}(X)\}$ parameterised by $u\in\re_+$ the ``randomised Esscher martingales'' (see Gerber \& Shiu \cite{gs} and Yao \cite{yao} for details on the Esscher transform). The randomization is produced by $h(u,X)$ which we call the ``random mixer''. 
\begin{example}\label{BEM}
{\rm
Let $\{W_t\}_{t\geq 0}$ be a standard Brownian motion that is independent of $X$, and set $L_t=W_t$ in Definition \ref{RandEssMart}. Then,  
\begin{equation}\label{examplebm}
M_{tu}(X) = \exp{\left[h(u,X)W_t - \tfrac{1}{2}\,h^2(u,X)t\right]}.
\end{equation}
}
\end{example}

\begin{example}\label{GEM}
{\rm
Let $\{\gamma_t\}_{t\geq 0}$ be a gamma process with rate parameter $m > 0$ and scale parameter $\kappa > 0$. Then $\EP[\gamma_t] = \kappa m t$ and $\textup{Var}[\gamma_t] = \kappa^2 m t$. We assume that
$\{\gamma_t\}$ is independent of $X$. Set $L_t=\gamma_t$ in Definition \ref{RandEssMart}. Then, if $h(u,X) < \kappa^{-1}$, we have
\begin{equation}
M_{tu}(X) = \left[1-\kappa\,h(u,X)\right]^{mt}\exp{\left[h(u,X)\,\gamma_t\right]}.
\end{equation}
}
\end{example}


\section{Filtered Esscher martingales}
In this section we construct a projection of the randomised Esscher martingales that can be interpreted as follows. Let us suppose that the exact combination of L\'evy processes that forms the stochastic basis of the martingale family $\{M_{tu}(X)\}$ is unknown. That is, we may have little knowledge about how much each of the L\'evy processes involved actually contributes to the stochastic evolution of $\{M_{tu}(X)\}$.  The random vector $h(u,X)$ however, can naturally be interpreted as the quantity inside $\{M_{tu}(X)\}$ that determines at time $u$ the random mixture of L\'evy processes driving the martingale family. Given a certain set of information, the actual mixture might not be fully observable, though. This leads us to the following construction that applies the theory of stochastic filtering. For simplicity, we focus on the case where $X$ is a one-dimensional random variable.
\\

We introduce a standard Brownian motion $\{B_t\}_{t\geq 0}$ on $(\Omega, \cF, \pr)$, and define the filtration $\{\cG_t\}$ by
\begin{equation}\label{Gfil}
\cG_t=\sigma\left(\{B_s\}_{0\leq s\leq t},\{L_s\}_{0\leq s\leq t}, X\right),
\end{equation}
where $\{B_t\}$ is taken to be independent of $X$ and $\{L_t\}$. We consider the pair
\begin{align}
 &\rd X_t=0,\label{signal}\\
 &\rd I_t=\ell(t,X)\rd t+\rd B_t,\label{InfoProc}
\end{align}
where $\ell:\re_+\times\re\rightarrow\re$ is a well-defined function. The solution to the signal equation (\ref{signal}) is of course the random variable $X$. In the theory of stochastic filtering, the process $\{I_t\}_{t\geq 0}$ is the so-called observation process. We have
\begin{equation}\label{Inonmarkov}
I_t = \int_0^t \ell(s,X)\rd s + B_t.
\end{equation}
Next, we introduce the filtration $\{\cF_t\}_{t\geq 0}$ defined by
\begin{equation}\label{Ffil}
 \cF_t=\sigma\left(\{I_s\}_{0\le s\le t},\{L_s\}_{0\le s\le t}\right),
\end{equation}
where $\cF_t\subset\cG_t$. The filtration $\{\cF_t\}$ provides full information about the L\'evy process $\{L_t\}$, however it only gives partial information about the random variable $X$. Let us thus consider the stochastic filtering problem defined by
\begin{equation}\label{hatM}
\hM_{tu} = \EP\left[M_{tu}(X)\,|\, \cF_t\right].
\end{equation} 
We emphasize that $X$ is not $\cF_t$-measurable and thus $\{M_{tu}(X)\}$ is not adapted to $\{\cF_t\}$. The filtering problem (\ref{hatM}) is solved in closed form by introducing 
\begin{equation}
\cE_t := \exp{\left(-\int_0^t \ell(s, X)\rd B_s - \tfrac{1}{2}\int_0^t \ell^2(s, X)\rd s\right)},
\end{equation}
where for all $t>0$
\begin{equation}
\mathbb{E}\left[\int_0^t \ell(s,X)^2 \rd s\right] < \infty, 
\end{equation}
and
\begin{equation}
\mathbb{E}\left[\int_0^t \cE_s\, \ell(s, X)^2 \rd s\right] < \infty.
\end{equation}
The process $\{\cE_t\}$ is a $(\{\cG_t\}, \pr)$-martingale (see, e.g., Bain \& Crisan \cite{bc}), and it may be used to define a change-of-measure density martingale from $\pr$ to a new measure $\mathbb{B}$ by setting
\begin{equation}
\frac{\rd \B}{\rd \pr}\bigg|_{\cG_t} = \mathcal{E}_t.
\end{equation}
The $\mathbb{B}$-measure is characterised by the fact that $\{I_t\}$ is an $(\{\cF_t\}, \mathbb{B})$-Brownian motion. The Kallianpur-Striebel formula then states that
\begin{equation}\label{KS}
\EP\left[M_{tu}(X)\,|\,\cF_t\right] = \frac{\mathbb{E}^\B\left[\cE^{-1}_t M_{tu}(X)\,|\,\cF_t\right]}{\mathbb{E}^\B\left[\cE^{-1}_t\,|\,\cF_t\right]}.
\end{equation}  
This can be simplified to obtain:
\begin{equation}\label{DefinitionMhat}         
\EP\left[M_{tu}(X)\,|\,\cF_t\right]=\int_{-\infty}^\infty M_{tu}(x)\,f_t(x)\rd x,
\end{equation}
where the $\cF_t$-measurable conditional density $f_t(x)$ of the random variable $X$ is given by
\begin{equation}\label{densityprocess}
f_t(x) = \frac{f_0(x)\,\exp\left(\int_0^t \ell(s,x)\rd I_s - \tfrac{1}{2}\int_0^t \ell^2(s, x)\rd s\right)}{\int_{-\infty}^\infty f_0(y)\, \exp\left(\int_0^t \ell(s,y)\rd I_s - \tfrac{1}{2}\int_0^t \ell^2(s, y)\rd s\right)\rd y}.
\end{equation}
A similar filtering system is considered in a different context in Filipovi\'c {\it et al.}~\cite{fhm}. Further conditions are imposed on the dynamics of the information process defined in (\ref{signal}) and (\ref{InfoProc}), which may be regarded necessary from a modelling point of view.

\begin{proposition}
Let $\{\cF_t\}$ be given by (\ref{Ffil}), and define the projection $\hM_{tu} =\EP\left[M_{tu}(X)\,|\,\cF_t\right]$, where $\{M_{tu}(X)\}$ is given by (\ref{martingale}). Then, for $t, u \in \re_+$, $\{\hM_{tu}\}$ is an $(\{\cF_t\}, \pr)$-martingale family.
\end{proposition}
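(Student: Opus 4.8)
The plan is to verify, for each fixed $u\in\re_+$, the three defining properties of an $(\{\cF_t\},\pr)$-martingale for $\{\hM_{tu}\}$: integrability, $\{\cF_t\}$-adaptedness, and the identity $\EP[\hM_{tu}\,|\,\cF_s]=\hM_{su}$ for $0\le s\le t$. Integrability is immediate, since $\EP[|\hM_{tu}|]\le\EP\big[\EP[|M_{tu}(X)|\,|\,\cF_t]\big]=\EP[|M_{tu}(X)|]<\infty$ by Definition \ref{RandEssMart}, and $\hM_{tu}$ is $\cF_t$-measurable by construction. So the entire content lies in the martingale identity.

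For that identity, the first step is the tower property: because $\cF_s\subseteq\cF_t$, we get $\EP[\hM_{tu}\,|\,\cF_s]=\EP\big[\EP[M_{tu}(X)\,|\,\cF_t]\,|\,\cF_s\big]=\EP[M_{tu}(X)\,|\,\cF_s]$, so it remains to prove $\EP[M_{tu}(X)\,|\,\cF_s]=\EP[M_{su}(X)\,|\,\cF_s]$. The main obstacle here is that one cannot simply invoke the $(\{\cH_t\},\pr)$-martingale property of $\{M_{tu}(X)\}$, since $\{\cF_t\}$ is not a sub-filtration of $\{\cH_t\}$ (nor the reverse): $\cF_t$ carries the noisy observation $\{I_s\}_{0\le s\le t}$ but not $X$ itself. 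The way around this filtration mismatch is to condition on the intermediate $\sigma$-algebra $\cF_s\vee\sigma(X)$, which is large enough to expose the independence structure of $\{L_t\}$ while still allowing one to project back down onto $\cF_s$.

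Concretely, I would reuse the multiplicative factorization underlying equation (\ref{fact-mart}), writing $M_{tu}(X)=M_{su}(X)\,R_{s,t}$ with $R_{s,t}=\exp[h(u,X)(L_t-L_s)]\big/\EP\big[\exp[h(u,X)(L_t-L_s)]\,\big|\,X\big]$. Then $\EP[M_{tu}(X)\,|\,\cF_s]=\EP\big[\EP[M_{su}(X)R_{s,t}\,|\,\cF_s\vee\sigma(X)]\,\big|\,\cF_s\big]$; inside, $M_{su}(X)$ and the denominator of $R_{s,t}$ are $\cF_s\vee\sigma(X)$-measurable and pull out of the inner expectation, leaving the factor $\EP\big[\exp[h(u,X)(L_t-L_s)]\,\big|\,\cF_s\vee\sigma(X)\big]$ in the numerator. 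The key point to check --- the crux of the proof --- is that $L_t-L_s$ is independent of $\cF_s\vee\sigma(X)$; this follows because $I_r=\int_0^r\ell(v,X)\rd v+B_r$ shows $\cF_s\subseteq\sigma(\{B_r\}_{0\le r\le s},\{L_r\}_{0\le r\le s},X)$, and $\{L_t\}$ has independent increments and is independent of both $X$ and $\{B_t\}$. Since $h(u,X)$ is $\sigma(X)$-measurable, it follows that $\EP\big[\exp[h(u,X)(L_t-L_s)]\,\big|\,\cF_s\vee\sigma(X)\big]=\EP\big[\exp[h(u,X)(L_t-L_s)]\,\big|\,X\big]$ (both equal $\varphi(h(u,X))$, where $\varphi(\theta)=\EP[\exp(\theta(L_t-L_s))]$), so the ratio in $R_{s,t}$ collapses to $1$ and the inner expectation reduces to $M_{su}(X)$. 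Hence $\EP[M_{tu}(X)\,|\,\cF_s]=\EP[M_{su}(X)\,|\,\cF_s]=\hM_{su}$, completing the proof.

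An alternative would be to start from the Kallianpur--Striebel representation $\hM_{tu}=\int_{-\infty}^\infty M_{tu}(x)f_t(x)\rd x$ in (\ref{DefinitionMhat}) and combine the $\{\cF_t\}$-martingale property of the normalised density process $\{f_t(x)\}$ with the $t$-structure of $M_{tu}(x)$. However, because $M_{tu}(x)$ still carries the running randomness of $L_t$, this route would force essentially the same independence argument, so I would present the direct computation above as the cleaner proof.
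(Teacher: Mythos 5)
Your proof is correct and follows essentially the same route as the paper's: both reduce, via the tower property, to computing $\EP[M_{tu}(X)\,|\,\cF_s]$ by passing through an intermediate enlarged $\sigma$-algebra containing $\cF_s$ on which the independence of the increment $L_t-L_s$ from the past, from $X$, and from the observation noise can be exploited. The only (harmless) difference is that the paper conditions on $\cG_s$ and cites the fact that $\{M_{tu}(X)\}$ remains a martingale in the initially-and-Brownian-enlarged filtration $\{\cG_t\}$, whereas you condition on $\cF_s\vee\sigma(X)\subseteq\cG_s$ and re-derive the Esscher factorisation explicitly, which makes the argument self-contained.
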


\begin{proof}
Recall that $\cF_t \subset \cG_t$ for all $t\geq 0$. For $s\leq t$, we have
\begin{eqnarray}
\EP\left[\hM_{tu}\,|\,\cF_s\right] &=& \EP\left[\EP\left[M_{tu}(X)\,|\,\cF_t\right]\,|\,\cF_s\right],\nonumber\\
&=& \EP\left[M_{tu}(X)\,|\,\cF_s\right],\nonumber\\
&=& \EP\left[\EP\left[M_{tu}(X)\,|\,\cG_s\right]\,|\,\cF_s\right],\nonumber\\
&=& \EP\left[M_{su}(X)\,|\,\cF_s\right], \nonumber\\
&=& \hM_{su},
\end{eqnarray}
where we make use of the tower property of the conditional expectation, and the fact that $\{M_{tu}(X)\}$ is a $\{\cG_t\}$-martingale---since $\cH_t\subset\cG_t$ and $\{B_t\}$ is independent of $X$ and $\{L_t\}$.
\end{proof}
\medskip
\noindent {\bf Filtered Brownian martingales.} We consider Example \ref{BEM}, in which the total impact of uncertainties is modelled by a Brownian motion $\{W_t\}$. The corresponding filtered Esscher martingale is 
\begin{align}\label{mhat_bm}
\hM_{tu} = \int_{-\infty}^\infty f_t(x)\exp{\left(h(u,x)W_t-\tfrac{1}{2}h^2(u,x)t\right)}\rd x,
\end{align}
where the density process $\{f_t(x)\}$, given in (\ref{densityprocess}), is driven by the information process defined by (\ref{InfoProc}).

\begin{proposition}
 The filtered Brownian models have dynamics
\begin{equation}
\rd \hM_{tu} = \int_{-\infty}^\infty M_{tu}(x)f_t(x)\left[h(u,x)\rd W_t + V_t(x)\rd Z_t\right]\rd x,
\end{equation}
where 
\begin{equation}
 M_{tu}(x)=\exp\left[h(u,x)W_t-\tfrac{1}{2}h^2(u,x)t\right],
\end{equation}
\begin{equation}\label{veetx}
V_t(x) = \ell(t,x)-\EP\left[\ell(t,X)\,|\,\cF_t\right], 
\end{equation}
\begin{equation}
Z_t = I_t - \int_0^t \EP\left[\ell(s,X)\,|\,\cF_s\right]\,\rd s,
\end{equation}
and $f_t(x)$ is defined in (\ref{densityprocess}).
\end{proposition}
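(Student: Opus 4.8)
The plan is to differentiate the representation $\hM_{tu} = \int_{-\infty}^\infty M_{tu}(x) f_t(x)\,\rd x$ from (\ref{mhat_bm}) under the integral sign: apply the It\^o product rule to the integrand $M_{tu}(x)f_t(x)$ for each fixed $x$, and then invoke a stochastic Fubini theorem to move $\rd$ through the $\rd x$-integration. This requires three ingredients: the dynamics of $M_{tu}(x)$, the dynamics of $f_t(x)$, and their cross-variation.

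First, for fixed $x$ the process $M_{tu}(x)=\exp[h(u,x)W_t-\tfrac12 h^2(u,x)t]$ is a stochastic exponential, so It\^o's formula gives $\rd M_{tu}(x)=M_{tu}(x)\,h(u,x)\,\rd W_t$. Second, I would derive $\rd f_t(x)=f_t(x)\,V_t(x)\,\rd Z_t$. Writing $p_t(x)=f_0(x)\exp\big(\int_0^t\ell(s,x)\,\rd I_s-\tfrac12\int_0^t\ell^2(s,x)\,\rd s\big)$ for the unnormalised density appearing in (\ref{densityprocess}), It\^o gives $\rd p_t(x)=p_t(x)\,\ell(t,x)\,\rd I_t$, and with $\bar p_t:=\int_{-\infty}^\infty p_t(y)\,\rd y$ one obtains $\rd\bar p_t=\bar p_t\,\EP[\ell(t,X)\,|\,\cF_t]\,\rd I_t$. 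Applying the It\^o quotient rule to $f_t(x)=p_t(x)/\bar p_t$ and collecting the bounded-variation terms produces exactly $\rd f_t(x)=f_t(x)\big(\ell(t,x)-\EP[\ell(t,X)\,|\,\cF_t]\big)\big(\rd I_t-\EP[\ell(t,X)\,|\,\cF_t]\,\rd t\big)=f_t(x)V_t(x)\,\rd Z_t$; this is the Kushner--Stratonovich equation, with $\{Z_t\}$ the innovations process, which is an $(\{\cF_t\},\pr)$-Brownian motion.

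Third, the cross-variation. Here $M_{tu}(x)$ is driven by $\rd W_t$, i.e.\ by the L\'evy driver $L_t=W_t$ of Example \ref{BEM}, while $f_t(x)$ is driven by $\rd Z_t=\rd B_t+\big(\ell(t,X)-\EP[\ell(t,X)\,|\,\cF_t]\big)\rd t$, and $\{B_t\}$ is independent of $\{L_t\}$ by the construction of $\{\cG_t\}$ in (\ref{Gfil}); hence $\rd\langle M_{tu}(x),f_t(x)\rangle_t=0$. The It\^o product rule then gives $\rd\big(M_{tu}(x)f_t(x)\big)=M_{tu}(x)f_t(x)\big[h(u,x)\,\rd W_t+V_t(x)\,\rd Z_t\big]$, and integrating over $x\in\re$ while exchanging the order of integration yields the stated dynamics.

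The step I expect to be the main obstacle is the rigorous justification of the interchange of $\rd x$-integration and stochastic integration: one must verify, using the integrability conditions imposed earlier on $\ell$ and on $h$, that the integrals $\int_{-\infty}^\infty M_{tu}(x)f_t(x)h(u,x)\,\rd x$ and $\int_{-\infty}^\infty M_{tu}(x)f_t(x)V_t(x)\,\rd x$ are well-defined and meet the square-integrability hypotheses of the stochastic Fubini theorem on $[0,T]$. By comparison, the It\^o computations for $M_{tu}(x)$ and $f_t(x)$ and the vanishing of the covariation term are routine.
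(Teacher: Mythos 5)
Your proposal is correct and follows essentially the same route as the paper: It\^o's formula for $M_{tu}(x)$, the innovations-form SDE $\rd f_t(x)=f_t(x)V_t(x)\,\rd Z_t$, a vanishing cross-variation because $\{W_t\}$ and $\{B_t\}$ are independent, and the product rule integrated over $x$. The only differences are that you derive the density dynamics explicitly via the unnormalised density and the quotient rule where the paper simply cites Filipovi\'c \emph{et al.}, and that you flag the stochastic Fubini interchange, which the paper passes over in silence; both are welcome additions rather than deviations.
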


\begin{proof}
We first show that
\begin{equation}
\rd M_{tu}(x) = h(u,x)M_{tu}(x)\rd W_t.
\end{equation}
In Filipovi\'c {\it et al}.~\cite{fhm} it is proved that
\begin{equation}
\rd f_t(x) = f_t(x)\left(\ell(t,x)-\EP\left[\ell(t,X)\,|\,\cF_t\right]\right)\rd Z_t,
\end{equation}
where $\{Z_t\}_{t\geq 0}$ is an $(\{\cF_t\}, \pr)$-Brownian motion, defined by
\begin{equation}
Z_t = I_t - \int_0^t \EP\left[\ell(s,X)\,|\,\cF_s\right]\,\rd s.
\end{equation}
Thus by the Ito product rule, we get 
\begin{equation}
\rd[M_{tu}(x)f_t(x)] = f_t(x)\rd M_{tu}(x) + M_{tu}(x)\rd f_t(x)
\end{equation}
since $\rd W_t\,\rd Z_t = 0$. This simplifies to
\begin{equation}
\rd[M_{tu}(x)f_t(x)] = M_{tu}(x)f_t(x)\left[h(u,x)\rd W_t + \left(\ell(t,x)-\int_{-\infty}^\infty \ell(t,y)f_t(y)\rd y\right)\rd Z_t\right],
\end{equation}
and we obtain
\begin{equation}
\rd \hM_{tu} = \int_{-\infty}^\infty M_{tu}(x)f_t(x)\left[h(u,x)\rd W_t + V_t(x)\rd Z_t\right]\rd x
\end{equation}
where we define
\begin{equation}
V_t(x) = \ell(t,x) - \int_{-\infty}^\infty \ell(t,y)f_t(y)\,\rd y.
\end{equation}
\end{proof}
\begin{remark}
The dynamics of $\{\hM_{tu}\}$ can be written in the following form:
\begin{equation}\label{dyn_mhat}
\rd \hM_{tu} = \EP\left[M_{tu}(X)h(u,X)\,|\,\cF_t\right]\rd W_t + \EP\left[M_{tu}(X)V_t(X)\,|\,\cF_t\right]\rd Z_t.
\end{equation} 
\end{remark}
\medskip
\noindent {\bf Filtered gamma martingales.} Let us suppose that the total impact of uncertainties on an economy is modelled by a gamma process $\{\gamma_t\}$ with density 
\begin{equation}
\mathbb{P}(\gamma_t \in \rd y) = \frac{y^{mt-1}\exp{\left(-\frac{y}{\kappa}\right)}}{\kappa^{mt}\,\Gamma[mt]}\,\rd y,
\end{equation}
where $m$ and $\kappa$ are the rate and the scale parameter, respectively. The associated randomised Esscher martingale is given in Example \ref{GEM}, where $h(u,X)<\kappa^{-1}$. The corresponding filtered process takes the form
\begin{equation}\label{gammaMhat}
 \hM_{tu}=\int_{-\infty}^\infty f_t(x)\left(\left[1-\kappa\,h(u,x)\right]^{mt}\exp{\left[h(u,x)\,\gamma_t\right]}\right)\rd x
\end{equation}
for $h(u,x)<\kappa^{-1}$, and where the density $f_t(x)$ is given by (\ref{densityprocess}).
\\

\noindent {\bf Filtered compound Poisson and gamma martingales.} We now construct a model based on two independent L\'evy processes: a gamma process (as defined previously) and a compound Poisson process. The idea here is to use the infinite activity gamma process to represent small frequently-occurring jumps, and to use the compound Poisson process to model jumps, which are potentially much larger in magnitude, and may occur sporadically. Let $\{C_t\}_{t\geq 0}$ denote a compound Poisson process given by
\begin{equation}
C_t = \sum_{i=1}^{N_t} Y_i
\end{equation}
where $\{N_t\}_{t\geq 0}$ is a Poisson process with rate $\lambda$. The independent and identically distributed random variables $Y_i$ are independent of $\{N_t\}$. The moment generating function is given by
\begin{equation}
\EP\left[\exp{\left(\varrho\,C_t\right)}\right] = \exp{\left[\lambda t\left(M_Y(\varrho)-1\right)\right]} 
\end{equation}
where $M_Y$ is the moment generating function of $Y_i$. For $h_1(u, X) < \kappa^{-1}$, we have
\begin{eqnarray}
M_{tu}(X) &=& \frac{\exp{\left(h_1(u,X)\gamma_t + h_2(u,X)C_t\right)}}{\EP\left[\exp{\left(h_1(u,X)\gamma_t + h_2(u,X)C_t\right)}\,|\,X\right]}\nonumber\\\nonumber\\
&=& \frac{\exp{\left(h_1(u,X)\gamma_t\right)}}{\EP\left[\exp{\left(h_1(u,X)\gamma_t\right)}\,|\,X\right]}\cdot \frac{\exp{\left(h_2(u,X)C_t\right)}}{\EP\left[\exp{\left(h_2(u,X)C_t\right)}\,|\,X\right]}\nonumber\\\nonumber\\
&=& M^{(\gamma)}_{tu}(X)\;M^{(C)}_{tu}(X),
\end{eqnarray}
where, conditional on $X$, $\exp{\left(h_1(u,X)\gamma_t\right)}$ and $\exp{\left(h_2(u,X)C_t\right)}$ are independent. Furthermore,
\begin{eqnarray}
M^{(\gamma)}_{tu}(X) &=& \left(1-\kappa\,h_1(u,X)\right)^{mt}\exp{\left(h_1(u,X)\gamma_t\right)},\\
M^{(C)}_{tu}(X) &=& \exp{\left[h_2(u,X)C_t-\lambda t\left(M_Y(h_2(u,X))-1\right)\right]}.
\end{eqnarray}
Then, the filtered process takes the form
\begin{align}
 \hM_{tu} = \int_{-\infty}^\infty f_t(x)\,&\left[1-\kappa\,h_1(u,x)\right]^{mt}\nn\\
&\times \exp{\left[h_1(u,x)\,\gamma_t + h_2(u,X)C_t-\lambda t\left(M_Y(h_2(u,X))-1\right)\right]}
\rd x,
\end{align}
where $f_t(x)$ is given by (\ref{densityprocess}).


\section{Filtered Esscher martingales with L\'evy information}
\noindent Up to this point, we have considered a Brownian information process given by equation (\ref{InfoProc}). However, the noise component in the information process may be modelled by a L\'evy process with randomly sized jumps, that is independent of the L\'evy process $\{L_t\}$ used to construct the randomised Esscher martingale. In what follows, we give an example of continuously-observed information, which is distorted by gamma-distributed pure noise.
\begin{example}
{\rm
Let $\{\widetilde{\gamma}_t\}_{t\geq 0}$ be a gamma process  with rate and scale parameters $\widetilde{m}$ and $\widetilde{\kappa}$, respectively. We define the gamma information process by
\begin{equation}\label{gammaI}
I_t = X \widetilde{\gamma}_t.
\end{equation}
Brody \& Friedman \cite{bf} consider such an observation process in a similar situation. We define the filtration $\{\mathcal{G}_t\}$ by
\begin{equation}
\mathcal{G}_t = \sigma\left(\{\widetilde{\gamma}_s\}_{0\leq s\leq t}, \{L_s\}_{0\leq s\leq t}, X\right),
\end{equation}
and $\{\cF_t\}$ by
\begin{equation}
\mathcal{F}_t = \sigma\left(\{L_s\}_{0\leq s\leq t}, \{I_s\}_{0\leq s\leq t}\right)
\end{equation}
where $\{I_t\}$ is given by (\ref{gammaI}). To derive the conditional density of $X$ given $\mathcal{F}_t$, we first show that $\{I_t\}$ is a Markov process with respect to its own filtration. That is, for $a\in\mathbb{R}$,
\begin{eqnarray}
\mathbb{P}\left[I_t < a\,|\, I_s, I_{s_1}, \ldots, I_{s_n}\right] = \mathbb{P}\left[I_t < a\,|\, I_s\right]
\end{eqnarray}
for all $t\geq s\geq s_1 \geq \ldots \geq s_n\geq 0$ and for all $n\geq 1$. It follows that
\begin{eqnarray}
\mathbb{P}\left[I_t < a\,|\, I_s, I_{s_1}, \ldots, I_{s_n}\right] &=& \mathbb{P}\left[I_t < a\,\bigg|\, I_s, \frac{I_{s_1}}{I_s}, \ldots, \frac{I_{s_n}}{I_{s_{n-1}}}\right]\nn\\
&=& \mathbb{P}\left[X\,\widetilde{\gamma}_t < a\,\bigg|\, X\,\widetilde{\gamma}_s, \frac{\widetilde{\gamma}_{s_1}}{\widetilde{\gamma}_s}, \ldots, \frac{\widetilde{\gamma}_{s_n}}{\widetilde{\gamma}_{s_{n-1}}}\right].
\end{eqnarray}
It can be proven that $\widetilde{\gamma}_{s_1}/\widetilde{\gamma}_s, \ldots, \widetilde{\gamma}_{s_n}/\widetilde{\gamma}_{s_{n-1}}$ are independent of $\widetilde{\gamma}_s$ and $\widetilde{\gamma}_t$ (see Brody {\it et al}.~\cite{bhm3}). Furthermore, $\widetilde{\gamma}_{s_1}/\widetilde{\gamma}_s, \ldots, \widetilde{\gamma}_{s_n}/\widetilde{\gamma}_{s_{n-1}}$ are independent of $X$. Thus we have
\begin{eqnarray}
\mathbb{P}\left[I_t < a\,|\, I_s, I_{s_1}, \ldots, I_{s_n}\right] = \mathbb{P}\left[I_t < a\,|\, I_s\right].
\end{eqnarray}
We assume that the random variable $X$ has a continuous \textit{a priori} density $f_0(x)$. Then the conditional density of $X$,
\begin{equation}
f_t(x) = \frac{\rd}{\rd x}\mathbb{P}\left[X \leq x\,|\,I_t\right],	
\end{equation}
is given by
\begin{eqnarray}
f_t(x) &=& \frac{f_0(x)\,p\left(I_t\,|\,X=x\right)}{\int_{-\infty}^\infty f_0(y)\,p\left(I_t\,|\,X=y\right)\,\rd y}\nn\\
&=& \frac{f_0(x)x^{-\widetilde{m}t}\exp{\left[-I_t/(\widetilde{\kappa} x)\right]}}{\int_{-\infty}^\infty f_0(y)y^{-\widetilde{m}t}\exp{\left[-I_t/(\widetilde{\kappa} y)\right]}\rd y},\label{densitygamma}
\end{eqnarray}
where we have used the Bayes formula. The filtered Esscher martingale is thus obtained by 
\begin{equation}
 \widehat{M}_{tu}=\mathbb{E}\left[M_{tu}(X)\,\vert\,\mathcal{F}_t\right].
\end{equation}
The result is:
\begin{equation}
\hM_{tu} = \int_{-\infty}^\infty M_{tu}(x)\; \frac{f_0(x)x^{-\widetilde{m}t}\exp{\left[-I_t/(\widetilde{\kappa} x)\right]}}{\int_{-\infty}^\infty f_0(y)y^{-\widetilde{m}t}\exp{\left[-I_t/(\widetilde{\kappa} y)\right]}\rd y}\; \rd x.
\end{equation}
}
\end{example}

\section{Pricing kernel models}
The absence of arbitrage in a financial market is ensured by the existence of a pricing kernel $\{\pi_t\}_{t\geq 0}$ satisfying $\pi_t > 0$ almost surely for all $t\geq 0$. We consider, in general, an incomplete market and let $\{S_t\}_{t \geq 0}$ denote the price process of a non-dividend paying asset. The price of such an asset at time $t\leq T$ is given by the following pricing formula:
\begin{equation} \label{mg_pk}
S_t  = \frac{1}{\pi_t}\,\EP\left[\pi_T S_T \,|\,\cF_t\right].
\end{equation}
The price of a discount bond system with price process $\{P_{tT}\}_{0\leq t\leq T<\infty}$ and payoff $P_{TT} = 1$ is given by 
\begin{equation}\label{bondprice}
P_{tT} = \frac{1}{\pi_t}\,\EP\left[\pi_T\,|\,\cF_t\right].
\end{equation}
The specification of a model for the pricing kernel is equivalent to choosing a model for the discount bond system, and thus also for the term structure of interest rates, and the excess rate of return. A sufficient condition for positive interest rates is that $\{\pi_t\}$ be an $(\{\cF_t\}, \pr)$-supermartingale. If, in addition, the value of a discount bond should vanish in the limit of infinite maturity, then $\{\pi_t\}$ must satisfy
\begin{equation} 
\lim_{T\rightarrow \infty}\EP\left[\pi_T\right] = 0.
\end{equation}
A positive right-continuous supermartingale with this property is called a potential. Let $\{A_t\}_{t\geq 0}$ be an $\{\cF_t\}$-adapted process with right-continuous non-decreasing paths, where $A_0 = 0$ almost surely, and let $\{A_t\}$ be integrable, that is, $\mathbb{E}\left[A_\infty\right] < \infty$ where
$A_\infty := \lim_{t\rightarrow\infty}{A_t}$. Then any right-continuous version of the supermartingale
\begin{equation}\label{classDpot}
\zeta_t = \mathbb{E}\left[A_\infty\, |\, \cF_t\right]- A_t
\end{equation}
is a potential of class (D)\footnote{A right-continuous $\{\cF_t\}$-adapted stochastic process $\{X_t\}_{t\geq 0}$ is said to belong to the class (D) if the random variables $X_\tau$ are uniformly integrable, where $\tau$ is any finite $\{\cF_t\}$-stopping time.}, see Meyer \cite{mey}. Let us denote by $\{\zeta_t\}_{t\geq 0}$ the potential generated by $\{A_t\}$. Meyer \cite{mey} proved that a potential belongs to the class (D) if, and only if, it is generated by a process $\{A_t\}$. Thus, it is enough to choose a
process $\{A_t\}$ to model the pricing kernel.

Flesaker \& Hughston \cite{fh} provide a framework for constructing positive interest rate models, in which the pricing kernel is modelled by
\begin{equation}\label{FHpk}
\pi_t = \int_t^\infty \rho(u)\,m_{tu}\,\rd u,
\end{equation}
where $\{m_{tu}\}_{0\leq t\leq u<\infty}$ is a family of positive unit-initialized martingales, and 
\begin{equation}
\rho(t) = -\partial_t P_{0t}.
\end{equation}
It can be shown that the pricing kernel (\ref{FHpk}) is a potential generated by
\begin{equation}
A_t = \int_0^t \rho(u)\, m_{uu}\,\rd u,
\end{equation}
and thus, that it is a potential of class (D). Furthermore, given a potential (\ref{classDpot}) where $\{A_t\}$ is an increasing, integrable process of the form
\begin{equation}\label{abs}
A_t = \int_0^t a_s\, \rd s,
\end{equation}
with $\{a_t\}_{t\geq 0}$ a nonnegative process, there exist a deterministic function
\begin{equation}
\rho(u) = \frac{1}{\pi_0}\,\EP\left[a_u\right],
\end{equation}
and a positive martingale 
\begin{equation}
m_{tu} = \frac{\EP\left[a_u \,|\,\cF_t\right]}{\EP\left[a_u\right]}
\end{equation}
for each fixed $u \geq t$ where $m_{0u}=1$, such that the class (D) potential can be written in the form 
\begin{equation}
\pi_t = \pi_0\,\int_t^\infty \rho(u)\,m_{tu}\,\rd u,
\end{equation}
where $\pi_0$ is a scaling factor. Thus, the Flesaker-Hughston models are precisely the class of pricing kernels that are class (D) potentials where $\{A_t\}$ is increasing, integrable and of the form (\ref{abs}), see Hunt \& Kennedy \cite{hk}. Therefore, to model such class (D) potentials, it suffices to specify a family of positive martingales. 

In what follows, we construct explicit Flesaker-Hughston models, which are driven by a randomised mixture of L\'evy processes. We develop such a class of pricing kernels by setting
\begin{equation}\label{fhpk}
\pi_t = \int_t^\infty \rho(u)\,\hM_{tu}\,\rd u
\end{equation}
where the martingale family $\{\hM_{tu}\}_{0\leq t\leq u < \infty}$ is defined by (\ref{hatM}) with $\hM_{tu}>0$ and $\hM_{0u} =1$. Then, the discount bond system is given by
\begin{equation}\label{fhbp}
P_{tT} = \frac{\int_T^\infty \rho(u)\,\hM_{tu}\,\rd u}{\int_t^\infty \rho(u)\,\hM_{tu}\,\rd u}.
\end{equation}
The associated instantaneous forward rate $\{r_{tT}\}_{0\leq t\leq T}$ is defined by $r_{tT} = -\partial_T \ln{P_{tT}}$. We deduce that
\begin{equation}\label{forward}
r_{tT} = \frac{\rho(T)\,\hM_{tT}}{\int_T^\infty \rho(u)\,\hM_{tu}\,\rd u},
\end{equation}
and that the short rate of interest $\{r_t\}_{t\geq 0}$ is given by the formula
\begin{equation}\label{fhshort}
r_t = \frac{\rho(t)\,\hM_{tt}}{\int_t^\infty \rho(u)\,\hM_{tu}\,\rd u},
\end{equation}
where $r_t := r_{tt}$. The interest rate is positive by construction. We note here that the pricing kernel models proposed in Brody {\it et al.}~\cite{bhmack} can be recovered by considering a special case of the random mixer, namely $h(u,X) = h(u)$.


\section{Pricing kernel models driven by filtered Brownian martingales}
In the case where the filtered martingales driving the pricing kernel are Gaussian processes, the dynamics of the discount bond system can be expressed by a diffusion equation of the form (\ref{P-diffusioneqn}). Inserting the filtered Brownian martingale family (\ref{mhat_bm}) into (\ref{fhbp}), we obtain the price process of the discount bond in the Brownian set-up:

\begin{equation}
P_{tT} = \frac{\int_T^\infty \rho(u)\, \int_{-\infty}^\infty f_t(x)\,\exp{\left[h(u,x)W_t - \tfrac{1}{2}h^2(u,x)t\right]}\,\rd x\, \rd u}{\int_t^\infty \rho(v)\, \int_{-\infty}^\infty f_t(y)\,\exp{\left[h(v,y)W_t - \tfrac{1}{2}h^2(v,y)t\right]}\,\rd y\, \rd v}.
\end{equation}
A similar expression is obtained for the associated interest rate system by plugging (\ref{mhat_bm}) into (\ref{fhshort}).

\begin{proposition}
The dynamical equation of the discount bond process is given by 
\begin{equation}\label{P-diffusioneqn}
\frac{\rd P_{tT}}{P_{tT}} = \left[r_t -\theta_{tt}(\theta_{tT}-\theta_{tt}) - \nu_{tt}(\nu_{tT}-\nu_{tt}) \right]\rd t + (\theta_{tT}-\theta_{tt})\rd W_t + (\nu_{tT}-\nu_{tt})\rd Z_t
\end{equation}
where
\begin{eqnarray}
\theta_{tT} &:=& \frac{\int_T^\infty \rho(u)\,\EP\left[M_{tu}(X)h(u,X)\,|\,\cF_t\right]\,\rd u}{\int_T^\infty \rho(u)\,\hM_{tu}\,\rd u},\\
\nu_{tT} &:=& \frac{\int_T^\infty \rho(u)\,\EP\left[M_{tu}(X)V_t(X)\,|\,\cF_t\right]\,\rd u}{\int_T^\infty \rho(u)\,\hM_{tu}\,\rd u},
\end{eqnarray}
$\theta_{tt} = \theta_{tT}\big|_{T=t}$, and $\nu_{tt} = \nu_{tT}\big|_{T=t}$. 
\end{proposition}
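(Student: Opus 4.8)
The plan is to write the bond price as a ratio $P_{tT}=N_{tT}/D_t$ with numerator $N_{tT}:=\int_T^\infty\rho(u)\,\hM_{tu}\,\rd u$ and denominator $D_t:=\int_t^\infty\rho(u)\,\hM_{tu}\,\rd u$, to derive the It\^o dynamics of $\{N_{tT}\}$ and $\{D_t\}$ separately, and then to combine them via the It\^o quotient rule. Since $\{\hM_{tu}\}$ is a martingale with dynamics (\ref{dyn_mhat}) having no $\rd t$-term, all the drift in $\rd P_{tT}$ will come from the moving lower limit in $D_t$ and from the It\^o second-order corrections.

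First I would treat $N_{tT}$: the maturity cut-off $T$ is fixed, so the only $t$-dependence sits inside $\hM_{tu}$, and interchanging the stochastic differential with the $\rd u$-integral (a stochastic Fubini step) and inserting (\ref{dyn_mhat}) gives
\begin{equation}
\rd N_{tT}=\left(\int_T^\infty\rho(u)\,\EP\!\left[M_{tu}(X)h(u,X)\,|\,\cF_t\right]\rd u\right)\rd W_t+\left(\int_T^\infty\rho(u)\,\EP\!\left[M_{tu}(X)V_t(X)\,|\,\cF_t\right]\rd u\right)\rd Z_t,
\end{equation}
which, by the definitions of $\theta_{tT}$ and $\nu_{tT}$, is $\rd N_{tT}=N_{tT}(\theta_{tT}\,\rd W_t+\nu_{tT}\,\rd Z_t)$. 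For $D_t$ the variable $t$ appears both in the integrand and in the lower limit, so differentiation also produces the Leibniz boundary term $-\rho(t)\,\hM_{tt}\,\rd t$; using the short-rate formula (\ref{fhshort}) in the form $\rho(t)\,\hM_{tt}=r_t D_t$ this yields $\rd D_t=D_t(-r_t\,\rd t+\theta_{tt}\,\rd W_t+\nu_{tt}\,\rd Z_t)$, where $\theta_{tt}$ and $\nu_{tt}$ are $\theta_{tT}$ and $\nu_{tT}$ evaluated at $T=t$.

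Then I would apply the It\^o quotient rule
\begin{equation}
\frac{\rd P_{tT}}{P_{tT}}=\frac{\rd N_{tT}}{N_{tT}}-\frac{\rd D_t}{D_t}+\frac{(\rd D_t)^2}{D_t^2}-\frac{\rd N_{tT}\,\rd D_t}{N_{tT}\,D_t},
\end{equation}
using $\rd W_t\,\rd Z_t=0$ (as in the preceding proposition, because $\{B_t\}$, and hence the innovations process $\{Z_t\}$, is independent of $\{W_t\}$) and $\rd W_t^2=\rd Z_t^2=\rd t$. Substituting the two sets of dynamics and collecting terms, the $\rd W_t$- and $\rd Z_t$-coefficients come out as $\theta_{tT}-\theta_{tt}$ and $\nu_{tT}-\nu_{tt}$, while the drift is $r_t+\theta_{tt}^2+\nu_{tt}^2-\theta_{tT}\theta_{tt}-\nu_{tT}\nu_{tt}$, which regroups into $r_t-\theta_{tt}(\theta_{tT}-\theta_{tt})-\nu_{tt}(\nu_{tT}-\nu_{tt})$, giving (\ref{P-diffusioneqn}).

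The quotient-rule algebra and the term-collection are routine; the one genuine technical point is the justification of the stochastic Fubini interchange for $N_{tT}$ (and the differentiability of the boundary term for $D_t$), which rest on the integrability and decay-in-$u$ properties of $\rho(u)\,\hM_{tu}$, $\rho(u)\,\EP[M_{tu}(X)h(u,X)\,|\,\cF_t]$ and $\rho(u)\,\EP[M_{tu}(X)V_t(X)\,|\,\cF_t]$ that are implicit in the well-posedness of (\ref{fhpk})--(\ref{fhshort}); I would take these as standing assumptions rather than establish them in detail.
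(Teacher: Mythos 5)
Your proposal is correct and follows essentially the same route as the paper: differentiate the numerator $\int_T^\infty\rho(u)\,\hM_{tu}\,\rd u$ and the denominator $\int_t^\infty\rho(u)\,\hM_{tu}\,\rd u$ using the martingale dynamics (\ref{dyn_mhat}) together with the Leibniz boundary term $-\rho(t)\,\hM_{tt}\,\rd t$, identify the latter with $r_t$ times the denominator via (\ref{fhshort}), and apply the It\^o quotient rule with $\rd W_t\,\rd Z_t=0$. Your write-up actually carries out the quotient-rule algebra and term-collection more explicitly than the paper, which merely states that the quotient rule is applied and then reads off the volatilities and the risk premium $-\theta_{tt}(\theta_{tT}-\theta_{tt})-\nu_{tt}(\nu_{tT}-\nu_{tt})$.
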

\begin{proof}
First we have
\begin{equation}
\rd \left[\int_T^\infty \rho(u)\,\hM_{tu}\,\rd u\right] =  \int_T^\infty \rho(u)\,\rd \hM_{tu}\,\rd u
\end{equation}
where $\rd \hM_{tu}$ is given by (\ref{dyn_mhat}). Also,
\begin{equation}
\rd \left[\int_t^\infty \rho(u)\,\hM_{tu}\,\rd u\right] =  \int_t^\infty \rho(u)\,\rd \hM_{tu}\,\rd u -\rho(t)\,\hM_{tt}\,\rd t.
\end{equation}
We then apply the Ito quotient rule to obtain the dynamics of $\{P_{tT}\}$. We observe that the discount bond volatilities are given by
\begin{eqnarray}
\Omega^{(1)}_{tT} &=& \theta_{tT}-\theta_{tt},\\
\Omega^{(2)}_{tT} &=& \nu_{tT}-\nu_{tt}.
\end{eqnarray}
The market price of risk associated with $\{W_t\}$ is $\lambda^{(1)}_t := -\theta_{tt}$; the one associated with $\{Z_t\}$ is $\lambda^{(2)}_t := -\nu_{tt}$. The product between the bond volatility vector $\Omega_{tT} = (\Omega^{(1)}_{tT}, \Omega^{(2)}_{tT})$ and the market price of risk vector $\lambda_t = (\lambda^{(1)}_t, \lambda^{(2)}_t)$  gives us the risk premium associated with an investment in the discount bond, that is,
\begin{equation}
\Omega_{tT} \cdot \lambda_t = -\theta_{tt}\left(\theta_{tT}-\theta_{tt}\right) - \nu_{tt}\left(\nu_{tT}-\nu_{tt}\right).
\end{equation} 
\end{proof}

\begin{proposition}
Let $\{M_{tu}(X)\}$ be of the class (\ref{examplebm}), and let $\{\hM_{tu}\}$ in (\ref{forward}) be given by the martingale family (\ref{mhat_bm}). Then the dynamical equation of the forward rate is given by
\begin{equation}
\rd r_{tT} = \left[\theta_{tT}\,\partial_T \theta_{tT} + \nu_{tT}\,\partial_T \nu_{tT}\right]\rd t - \partial_T \theta_{tT}\,\rd W_t - \partial_T \nu_{tT}\,\rd Z_t
\end{equation}
where
\begin{equation}
\theta_{tT} := \frac{\int_T^\infty \rho(u)\,\EP\left[M_{tu}(X)h(u,X)\,|\,\cF_t\right]\,\rd u}{\int_T^\infty \rho(u)\,\hM_{tu}\,\rd u},
\end{equation}
and
\begin{equation}
\nu_{tT} := \frac{\int_T^\infty \rho(u)\,\EP\left[M_{tu}(X)V_t(X)\,|\,\cF_t\right]\,\rd u}{\int_T^\infty \rho(u)\,\hM_{tu}\,\rd u}
\end{equation}
where $V_t(X)$ is defined by (\ref{veetx}).
\end{proposition}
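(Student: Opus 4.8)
The plan is to mirror the proof of the preceding proposition, but to work with the logarithm of the numerator of the bond price and then differentiate in the maturity parameter. Write $B_{tT} := \int_T^\infty \rho(u)\,\hM_{tu}\,\rd u$, so that $P_{tT}=B_{tT}/B_{tt}$ and, since $B_{tt}$ does not depend on $T$,
\[
r_{tT} = -\partial_T \ln P_{tT} = -\partial_T \ln B_{tT} = \frac{\rho(T)\,\hM_{tT}}{B_{tT}},
\]
which recovers (\ref{forward}). The strategy is then: (i) find the $t$-dynamics of $B_{tT}$ for fixed $T$; (ii) apply Ito's formula to $\ln B_{tT}$; (iii) differentiate the resulting semimartingale decomposition with respect to $T$ and change sign.

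First I would compute $\rd B_{tT}$. Because the lower limit $T$ does not depend on $t$, there is no boundary term and $\rd B_{tT} = \int_T^\infty \rho(u)\,\rd \hM_{tu}\,\rd u$. Substituting the dynamics (\ref{dyn_mhat}) of $\{\hM_{tu}\}$ and recognising the definitions of $\theta_{tT}$ and $\nu_{tT}$ given in the statement, this collapses to $\rd B_{tT} = B_{tT}\left(\theta_{tT}\,\rd W_t + \nu_{tT}\,\rd Z_t\right)$. Since $\{W_t\}$ and $\{Z_t\}$ are independent Brownian motions (so $\rd W_t\,\rd Z_t = 0$, as used earlier), Ito's formula gives
\[
\rd \ln B_{tT} = \theta_{tT}\,\rd W_t + \nu_{tT}\,\rd Z_t - \tfrac{1}{2}\left(\theta_{tT}^2 + \nu_{tT}^2\right)\rd t .
\]
Applying $-\partial_T$ to both sides and using $\partial_T(\theta_{tT}^2)=2\theta_{tT}\partial_T\theta_{tT}$ and $\partial_T(\nu_{tT}^2)=2\nu_{tT}\partial_T\nu_{tT}$ produces exactly the claimed equation for $\{r_{tT}\}$.

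The main obstacle is the interchange of the parameter-derivative $\partial_T$ with the stochastic and Lebesgue integrals hidden inside the differential $\rd \ln B_{tT}$, i.e.\ justifying that the SDE for $\ln B_{tT}$ may be differentiated in $T$ term by term. This is a stochastic-Fubini / differentiation-under-the-integral-sign argument: one needs $T\mapsto\theta_{tT}$ and $T\mapsto\nu_{tT}$ to be differentiable with derivatives satisfying the integrability conditions that make $\int_0^t \partial_T\theta_{sT}\,\rd W_s$ and $\int_0^t \partial_T\nu_{sT}\,\rd Z_s$ well-defined martingales, together with the analogous control on the finite-variation part; these follow from the standing integrability assumptions on $h$, $\ell$ and $\rho$. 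A more pedestrian alternative, which avoids the parameter differentiation, is to write $r_{tT}=\rho(T)\hM_{tT}/B_{tT}$ and apply the Ito quotient rule directly, then invoke the identity $\partial_T\theta_{tT} = r_{tT}\bigl(\theta_{tT} - \EP[M_{tT}(X)h(T,X)\,|\,\cF_t]/\hM_{tT}\bigr)$ and its $\nu$-analogue --- each obtained by differentiating the defining ratios for $\theta_{tT}$ and $\nu_{tT}$ in $T$ --- to convert the quotient-rule output into the stated drift and diffusion coefficients; the extra bookkeeping there is the price paid for sidestepping the Fubini step.
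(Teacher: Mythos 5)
Your proposal is correct, but your primary route is genuinely different from the paper's. The paper applies the Ito quotient rule directly to $r_{tT}=\rho(T)\,\hM_{tT}\big/\!\int_T^\infty\rho(u)\hM_{tu}\,\rd u$ and then identifies the resulting coefficients via the two identities $\partial_T\theta_{tT}=r_{tT}\bigl(\theta_{tT}-\EP[M_{tT}(X)h(T,X)\,|\,\cF_t]/\hM_{tT}\bigr)$ and its $\nu$-analogue --- which is precisely the ``pedestrian alternative'' you sketch in your closing sentences. Your main argument instead exploits $r_{tT}=-\partial_T\ln B_{tT}$: since $B_{tT}$ is a positive $(\{\cF_t\},\pr)$-martingale with $\rd B_{tT}=B_{tT}(\theta_{tT}\,\rd W_t+\nu_{tT}\,\rd Z_t)$, Ito's formula together with $\rd W_t\,\rd Z_t=0$ gives $\rd\ln B_{tT}=\theta_{tT}\,\rd W_t+\nu_{tT}\,\rd Z_t-\tfrac12(\theta_{tT}^2+\nu_{tT}^2)\,\rd t$, and applying $-\partial_T$ yields the claim; I have checked that this reproduces exactly the stated drift and diffusion coefficients, and that it agrees with what the quotient rule produces. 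What your route buys is transparency: the HJM structure of the drift ($\theta_{tT}\partial_T\theta_{tT}+\nu_{tT}\partial_T\nu_{tT}$, which becomes $\Sigma_{tT}\partial_T\Sigma_{tT}+\Lambda_{tT}\partial_T\Lambda_{tT}$ after the Girsanov shift leading to (\ref{hjmforwarddynamics})) drops out automatically from the Ito correction of the logarithm, with no need to discover the $\partial_T\theta_{tT}$, $\partial_T\nu_{tT}$ identities. What it costs is the interchange of $\partial_T$ with the stochastic integrals --- a stochastic-Fubini step you correctly flag but only gesture at --- whereas the paper's quotient-rule route avoids that interchange entirely, since the only $T$-differentiations it requires are of ordinary Lebesgue integrals in $u$. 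At the level of rigour adopted throughout the paper either treatment is acceptable, and the two are equivalent.
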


\begin{proof}
We apply the Ito quotient rule to (\ref{forward}) to obtain the forward rate dynamics. We make the observations that 
\begin{equation}
\partial_T \theta_{tT} = r_{tT}\left(\theta_{tT}-\frac{\EP\left[M_{tT}(X)h(T,X)\,|\,\cF_t\right]}{\hM_{tT}}\right),
\end{equation}
and that
\begin{equation}
\partial_T \nu_{tT} = r_{tT}\left(\nu_{tT}-\frac{\EP\left[M_{tT}(X)V_t(X)\,|\,\cF_t\right]}{\hM_{tT}}\right).
\end{equation} 
\end{proof}
\noindent In particular, if we set
\begin{eqnarray}
\Sigma_{tT} &=& \theta_{tT} - \theta_{tt},\\
\Lambda_{tT} &=& \nu_{tT} - \nu_{tt},
\end{eqnarray}
then we can express the risk-neutral dynamics of the forward rate by
\begin{equation}\label{hjmforwarddynamics}
\rd r_{tT} = \left[\Sigma_{tT}\partial_T \Sigma_{tT}+\Lambda_{tT}\partial_T \Lambda_{tT}\right]\rd t - \partial_T \Sigma_{tT}\rd \widetilde{W}_t - \partial_T \Lambda_{tT}\rd \widetilde{Z}_t,
\end{equation}
where $\{\widetilde{W}_t\}_{t\geq 0}$ and  $\{\widetilde{Z}_t\}_{t\geq 0}$ are Brownian motions defined by the Girsanov relations
\begin{eqnarray}
\rd \widetilde{W}_t &=& \rd W_t + \lambda^{(1)}_t\,\rd t,\nonumber\\
\rd \widetilde{Z}_t &=& \rd Z_t +\lambda^{(2)}_t\,\rd t.
\end{eqnarray} 
The dynamical equation (\ref{hjmforwarddynamics}) has the form of the HJM dynamics for the forward rate under the risk-neutral measure, see Heath {\it et al}.~\cite{hjm}. 

\begin{example}\label{BrownianMotionExampleOne}
{\rm
As a first illustration, let us now consider the case in which the information process is defined by 
\begin{equation}\label{Ibhm}
I_t = \sigma X t + B_t,
\end{equation}
where $\sigma$ is a positive constant. It can be proven that this is a Markov process (see Brody {\it et al}.~\cite{bhm}). Equation (\ref{Ibhm}) is a special case of the path-dependent observation process (\ref{Inonmarkov}). Let $\{W_t\}$ be a standard Brownian motion that is independent of $X$. Then from Example \ref{BEM}, we have
\begin{equation}
M_{tu}(X) = \exp{\left[h(u,X)W_t - \tfrac{1}{2}\,h^2(u,X)t\right]}.
\end{equation}
We suppose that the \textit{a priori} distribution of $X$ is uniform over the interval $(a,b)$, where $a\geq 0$ and $b>0$. We choose to model the random mixer by
\begin{equation}
h(u,X) = c\exp{\left(-u X\right)}
\end{equation}
where $c\in \re$. Here $X$ can be interpreted as the random rate of the exponential decay in $h(u,X)$. We obtain the following expressions for the bond price
\begin{equation}
P_{tT} = \frac{\int_T^\infty \rho(u)\, \int_a^{b} \exp{\left[\sigma x I_t + c\textup{e}^{-u x}W_t - \tfrac{1}{2}\left(\sigma^2 x^2 + c^2\textup{e}^{-2u x} \right)t\right]} \,\rd x\,\rd u}{\int_t^\infty \rho(u)\, \int_a^{b} \exp{\left[\sigma y I_t + c\textup{e}^{-u y}W_t - \tfrac{1}{2}\left(\sigma^2 y^2 + c^2\textup{e}^{-2u y} \right)t\right]} \,\rd y\,\rd u},
\end{equation}
and the associated interest rate
\begin{equation}
r_t = \frac{\rho(t)\,\int_a^{b} \exp{\left[\sigma x I_t + c\textup{e}^{-t x}W_t - \tfrac{1}{2}\left(\sigma^2 x^2 + c^2\textup{e}^{-2t x} \right)t\right]} \,\rd x}{\int_t^\infty \rho(u)\, \int_a^{b} \exp{\left[\sigma y I_t + c\textup{e}^{-u y}W_t - \tfrac{1}{2}\left(\sigma^2 y^2 + c^2\textup{e}^{-2u y} \right)t\right]} \,\rd y}.
\end{equation}
Since the model is constructed from a single L\'evy process, it is not --- strictly speaking --- a mixture model as described previously. However, it can be viewed as a kind of two-factor Brownian model owing to the presence of the observation process $\{I_t\}$. The bond price and the associated interest rate are functions of time and the two state variables $W_t$ and $I_t$. Thus, it is straightforward to generate simulated sample paths:
 
\begin{figure}[H]
\begin{center}
\includegraphics[scale=0.80]{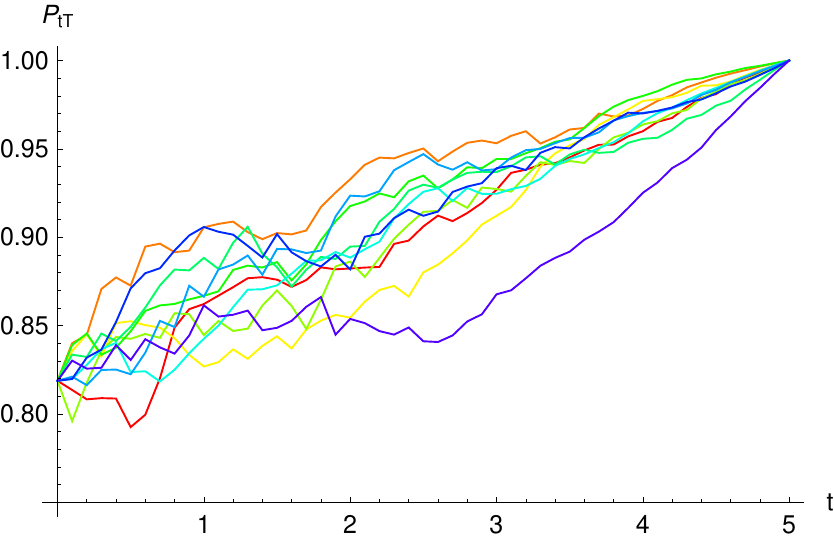}
\includegraphics[scale=0.80]{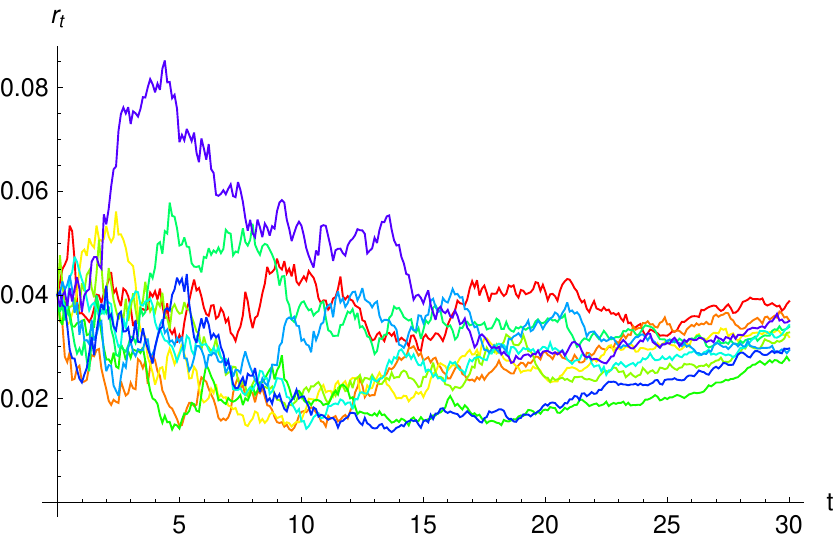}
\caption{Sample paths of discount bond with $T=5$ and short rate. We use the filtered Brownian model with $h(u,X) = c\exp{\left(-u X\right)}$ and $X\sim U(a,b)$. We set $a=0$, $b=0.1$, $\sigma=0.1$, $c=0.5$ and $P_{0t}=\exp{\left(-0.04 t\right)}$.}
\vspace{-0.5cm}
\label{figBB}
\end{center}
\end{figure}
The parameters $a$ and $b$ influence the rate at which $\exp{(-uX)}$ decays, and together with $c$ determine the impact of the Brownian motion $\{W_t\}$ on the bond and interest rate evolution. When $c$ is close to zero, the impact of $\{W_t\}$ is very small. For sufficiently large values of $b-a$, $\sigma$ or $|c|$, the numerical integration in the calculation of the pricing kernel may fail to converge. For large values of $t$, we observe that the sample paths of the short rate revert to $r_0$. Thus, there is built-in reversion to the initial level of the short rate.
}
\end{example}


\section{Bond prices driven by filtered gamma martingales}
Let $\{\gamma_t\}$ denote a gamma process with $\EP[\gamma_t] = \kappa mt$, and $\textup{Var}[\gamma_t] = \kappa^2 mt$. We consider a bond price model based on a pricing kernel that is driven by a family of filtered gamma martingales given by (\ref{gammaMhat}).
Then, equation (\ref{fhbp}) for the bond price gives the following expression:
\begin{equation}
P_{tT} = \frac{\int_T^\infty \rho(u)\, \int_{-\infty}^\infty f_t(x)\,\left[1-\kappa h(u,x)\right]^{mt}\, \exp{\left[h(u,x)\gamma_t\right]}\,\rd x\,\rd u}{\int_t^\infty \rho(v)\, \int_{-\infty}^\infty f_t(y)\,\left[1-\kappa h(v,y)\right]^{mt}\, \exp{\left[h(v,y)\gamma_t\right]}\,\rd y\,\rd v}.
\end{equation}
We now investigate this bond price model in more detail, and in particular show the effects of the various model components on the behaviour of the bond price. 

\begin{example}\label{GammaExamplewithBrownianInfo}
{\rm
Let the information process $\{I_t\}$, driving the conditional density $\{f_t(x)\}$ be of the form
\begin{equation}
I_t = \sigma t X + B_t,
\end{equation}
where $X$ is a binary random variable taking the values $X=1$ with \textit{a priori} probability $f_0(1)$, and $X=0$ with probability $f_0(0)$. We choose the random mixer
\begin{equation}
h(u,X)= c\exp{\left[-bu(1-X)\right]},
\end{equation}
where $c<\kappa^{-1}$ and $b>0$. Then the expression for the filtered gamma martingale simplifies to
\begin{equation}
\hM_{tu} = f_t(0)\exp{\left(c\textup{e}^{-bu}\gamma_t\right)}\left(1-\kappa c\textup{e}^{-bu}\right)^{mt} + f_t(1)\exp{\left(c\gamma_t\right)}\left(1-\kappa c\right)^{mt},
\end{equation}
where
\begin{align}
&f_t(0) = \frac{f_0(0)}{f_0(0) + f_0(1)\exp{\left(\sigma I_t - \tfrac{1}{2}\sigma^2 t\right)}}&
&\;\;f_t(1) = \frac{f_0(1)\exp{\left(\sigma I_t - \tfrac{1}{2}\sigma^2 t\right)}}{f_0(0) + f_0(1)\exp{\left(\sigma I_t - \tfrac{1}{2}\sigma^2 t\right)}}.&
\end{align}

There are a number of degrees of freedom in this model which have a significant impact on the behaviour of the trajectories. In what follows, we analyse the degrees of freedom one by one.
\\\\
\noindent\textbf{\textit{A priori} probability:} When $f_0(1)=0$, the diffusion $\{I_t\}$ plays no role. The sample paths of the discount bond and the short rate are driven solely by the pure jump process. The size of the jumps decays over time. As $f_0(1)$ increases, there is a greater amount of diffusion in the sample paths. Furthermore, there is a higher likelihood of obtaining sample paths for which the size of the jumps do not decay over time. If $f_0(1)=1$, then $\{\hM_{tu}\}$ is no longer $u$ dependent. This yields a stochastic pricing kernel, but flat short rate and deterministic discount bond prices, see Figure \ref{figGBp}.

\begin{figure}[H]
\begin{center}
\includegraphics[scale=0.56]{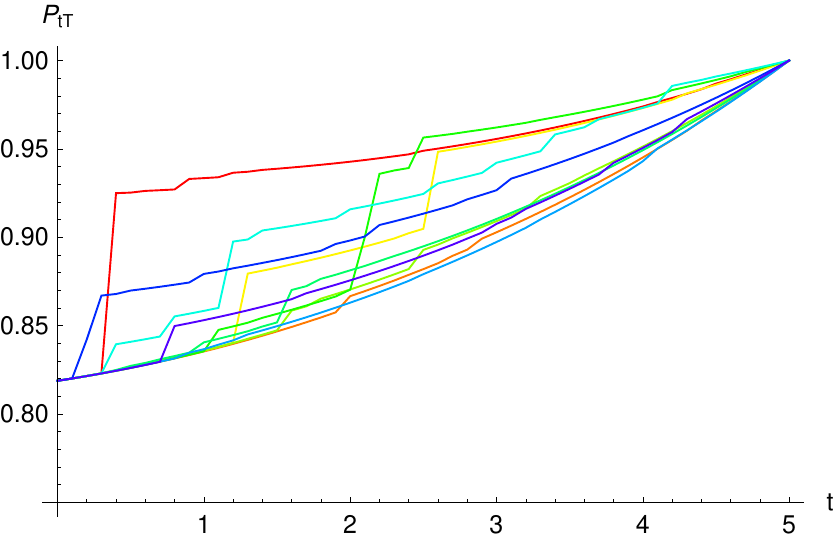}
\includegraphics[scale=0.56]{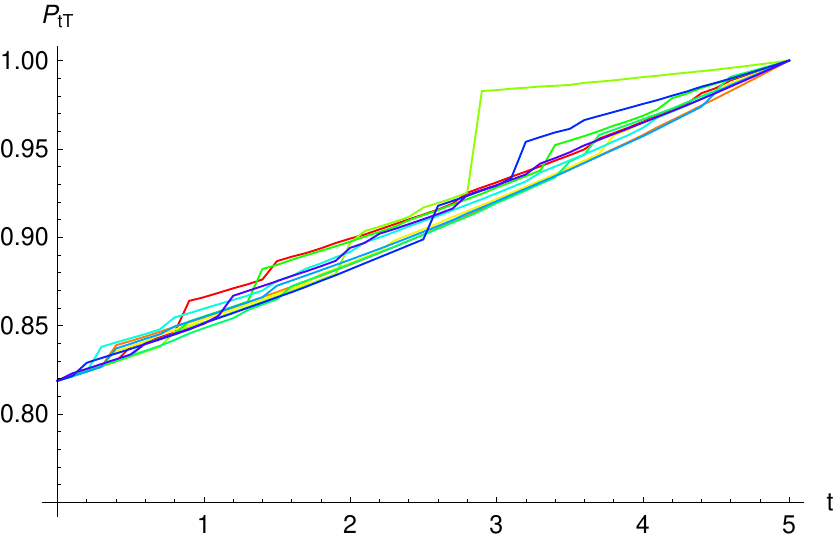}
\includegraphics[scale=0.56]{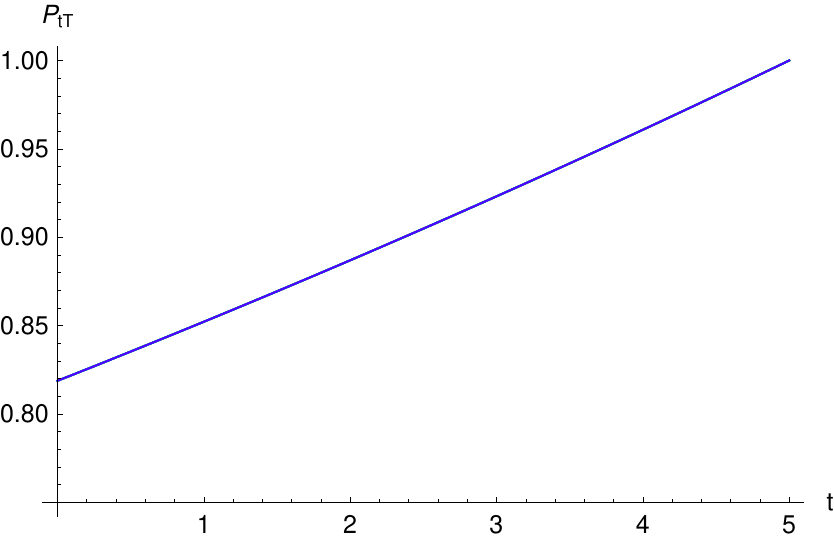}

\includegraphics[scale=0.56]{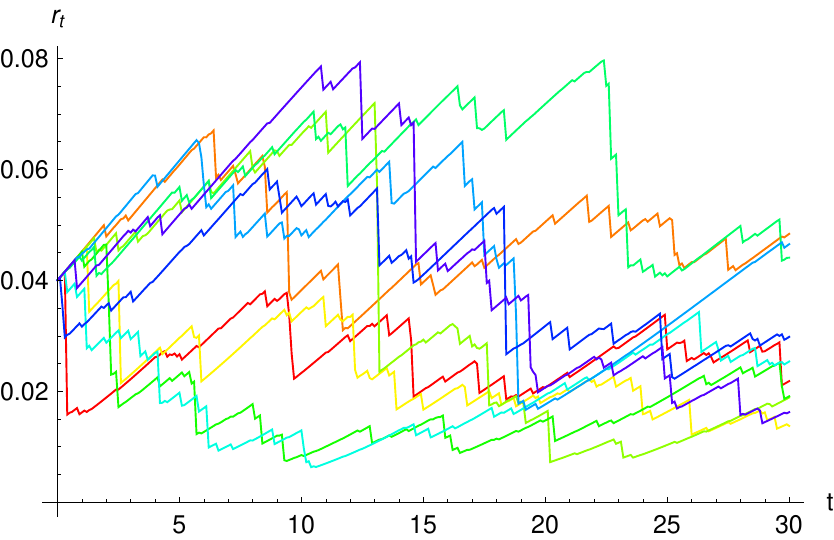}
\includegraphics[scale=0.56]{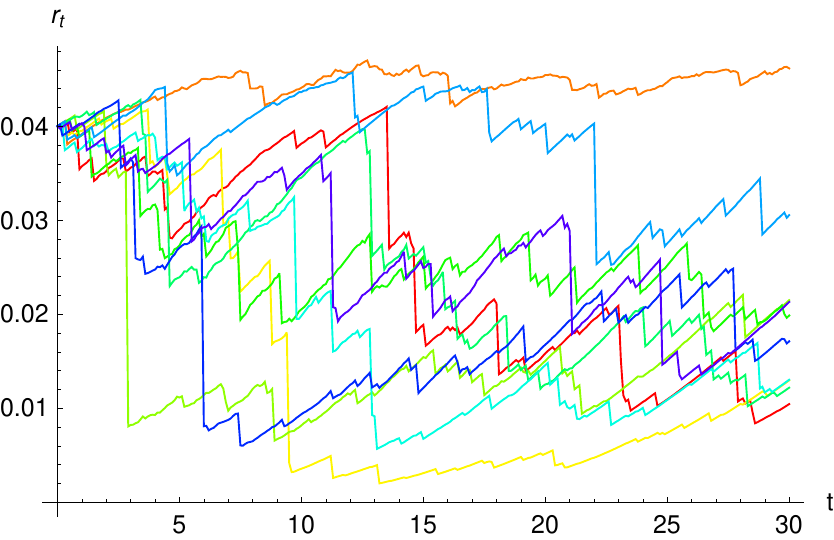}
\includegraphics[scale=0.56]{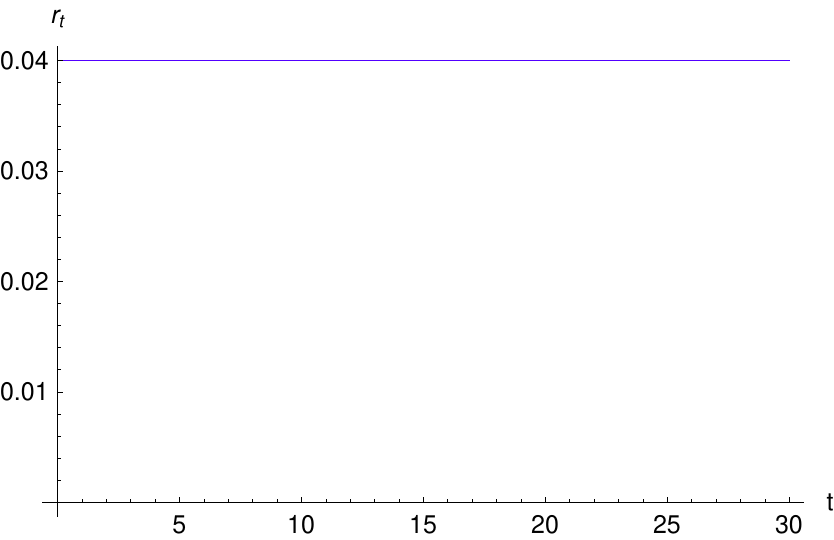}
\caption{Sample paths for discount bond with $T=5$, and associated short rate. We use the Brownian-gamma model with $h(u,X) = c\exp{[-bu(1-X)]}$ where $X=\{0,1\}$ with $m=0.5$, $\kappa=0.5$, $\sigma=0.1$, $c=-2$, $b=0.03$ and $P_{0t}=\exp{\left(-0.04 t\right)}$. We let $(i)\;f_0(1)=0$, $(ii)\; f_0(1)=0.65$ and $(iii)\;f_0(1)=1$.}\vspace{-0.5cm}
\label{figGBp}
\end{center}
\end{figure}

\noindent \textbf{Information flow rate $\sigma$:} As the information flow rate increases, the investor becomes more knowledgeable at an earlier stage about whether the random variable may take the value $X=0$ or $X=1$, see Figure \ref{figGBsigma}.
\\\\
\noindent
\textbf{Parameters of the gamma process $m$ and $\kappa$:} The rate parameter $m$ controls the rate of jump arrivals. The scale parameter $\kappa$ controls the jump size.
\\\\
\noindent
\textbf{Parameters of the random mixer $b$ and $c$:}  The magnitude of $c$ influences the impact of the jumps on the interest rate dynamics. When $c=0$, the pricing kernel, and thus the short rate of interest, is deterministic. The sign of $c$ affects the direction of the jumps. For $0<c<\kappa^{-1}$, the short rate (discount bond) sample paths have upward (downward) jumps. The opposite is true for $c<0$. It should be noted that $\exp{\left(c\exp{\left[-bu(1-X)\right]\gamma_t}\right)}$, and $\left(1-\kappa c \exp{\left[-bu(1-X)\right]}\right)^{mt}$ behave antagonistically in $c$. For large $t$, one term will eventually dominate the other. Thus, for both $c>0$ and $c<0$, the drift of the short rate trajectories is initially negative and then becomes positive for large $t$, see Figure \ref{figGBc}. The parameter $b$ determines how quickly the jumps are ``killed off''. Alternatively, $b$ can be viewed as the rate of reversion to the initial level of the interest rate. The interest rate process approaches the initial rate more rapidly for high values of $b$. When $b=0$, $\hM_{tu}$ is no longer $u$ dependent, and we obtain a stochastic pricing kernel, but flat short rate and deterministic discount bond prices, see Figure \ref{figGBb}.

\begin{figure}[H]
\begin{center}
\includegraphics[scale=0.56]{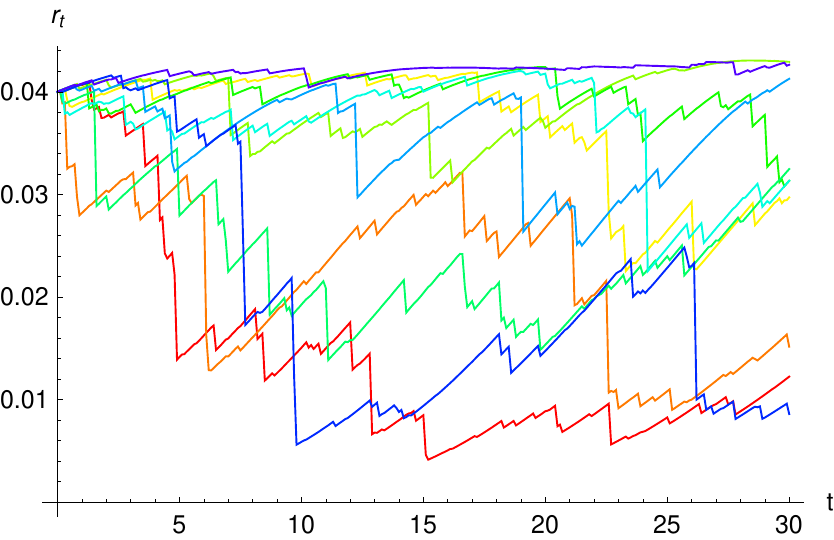}
\includegraphics[scale=0.56]{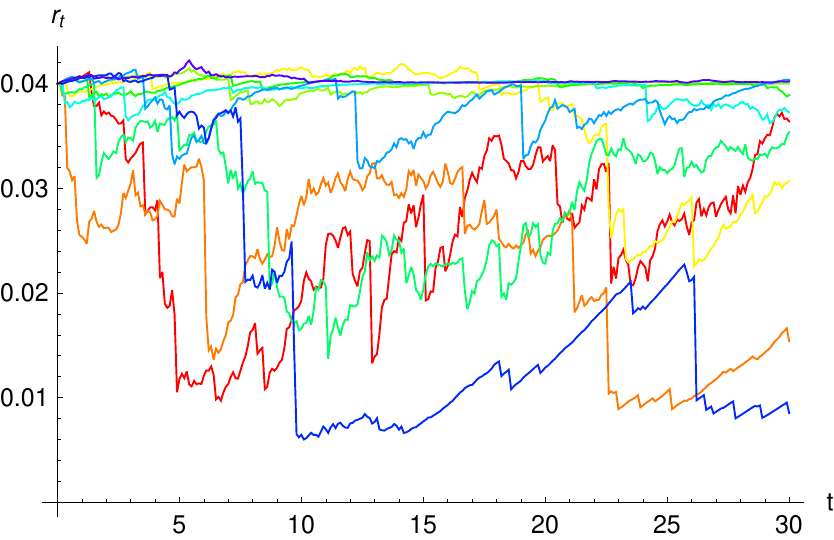}
\includegraphics[scale=0.56]{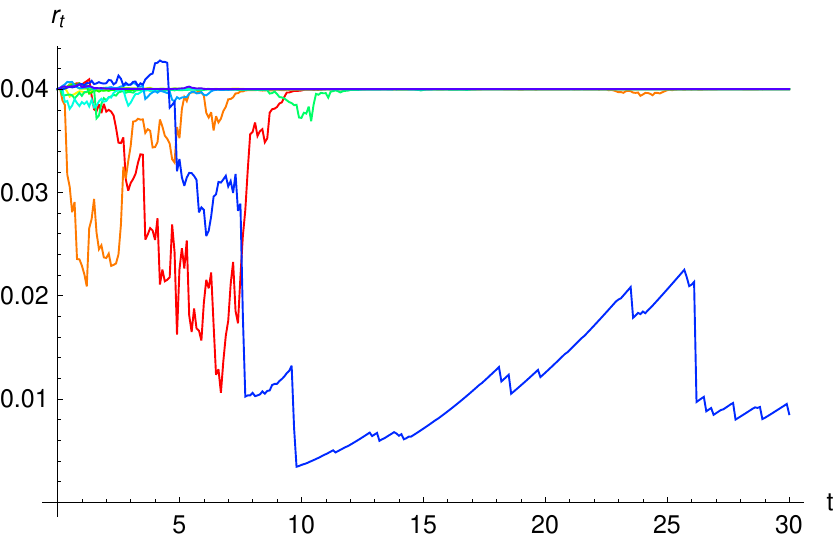}
\caption{Short rate sample paths for the Brownian-gamma model with $h(u,X) = c\exp{[-bu(1-X)]}$ and $X=\{0,1\}$. We choose $m=0.5$, $\kappa=0.5$, $f_0(1)=0.8$, $c=-2$, $b=0.03$ and $P_{0t}=\exp{\left(-0.04 t\right)}$. We set $(i)\;\sigma=0.005$, $(ii)\; \sigma=0.4$ and $(iii)\;\sigma=1.2$.}\vspace{-0.5cm}
\label{figGBsigma}
\end{center}
\end{figure}

\begin{figure}[H]
\begin{center}
\includegraphics[scale=0.56]{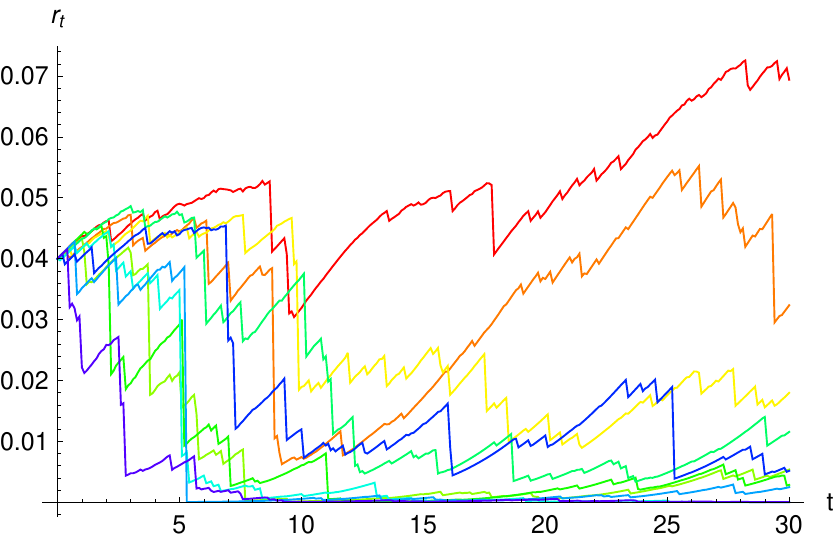}
\includegraphics[scale=0.56]{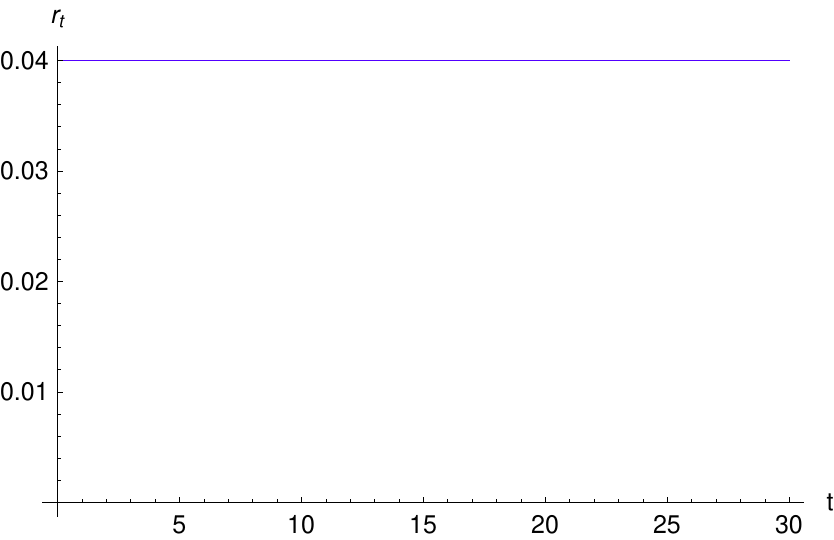}
\includegraphics[scale=0.56]{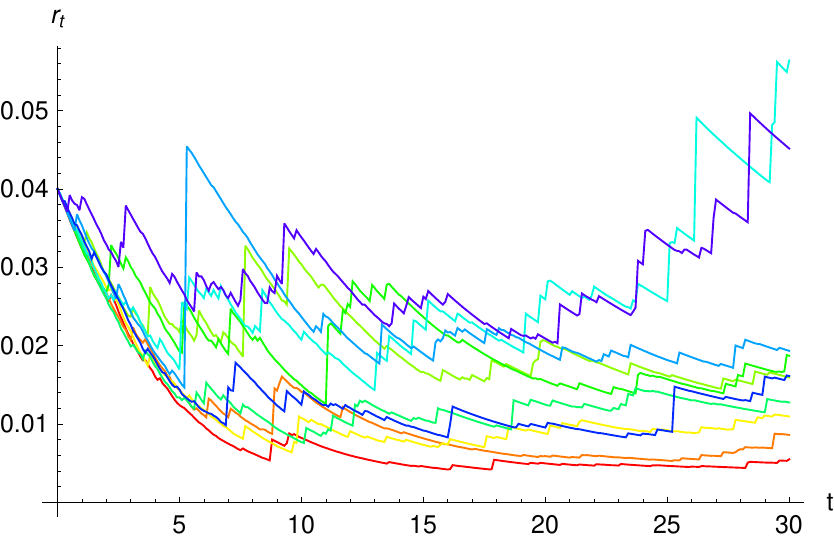}
\caption{Short rate sample paths for the Brownian-gamma model with $h(u,X) = c\exp{[-bu(1-X)]}$ and $X=\{0,1\}$. We set $m=0.5$, $\kappa=0.5$, $f_0(1)=0.5$, $\sigma=0.1$, $b=0.03$ and $P_{0t}=\exp{\left(-0.04 t\right)}$. We choose $(i)\; c = -5$, $(ii)\; c=0$ and $(iii)\;c=1.5$.}\vspace{-0.5cm}
\label{figGBc}
\end{center}
\end{figure}

\begin{figure}[H]
\begin{center}
\includegraphics[scale=0.56]{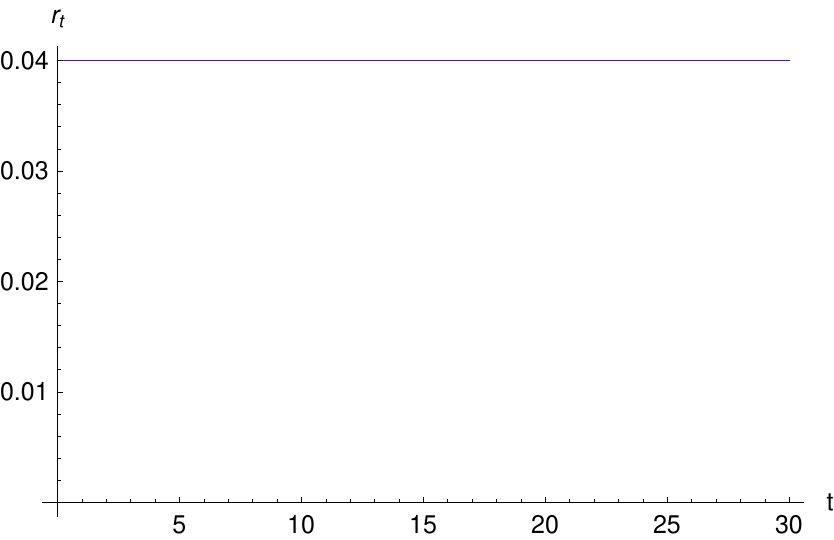}
\includegraphics[scale=0.56]{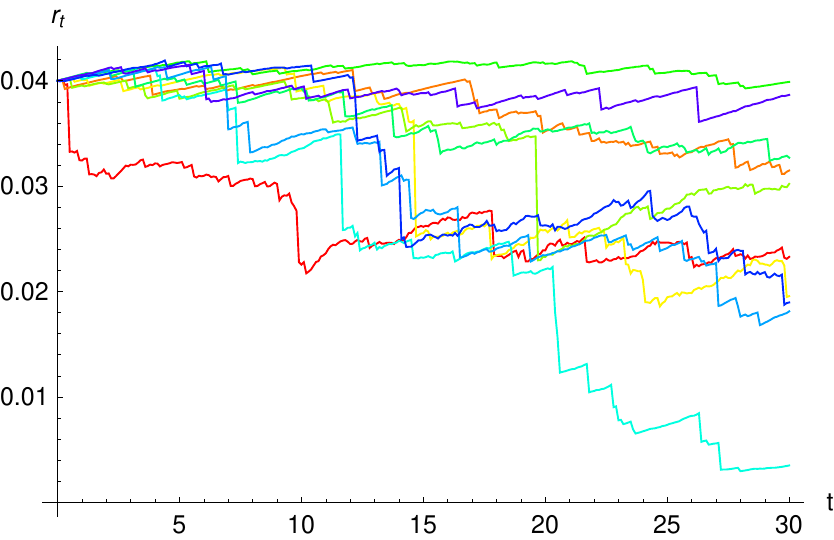}
\includegraphics[scale=0.56]{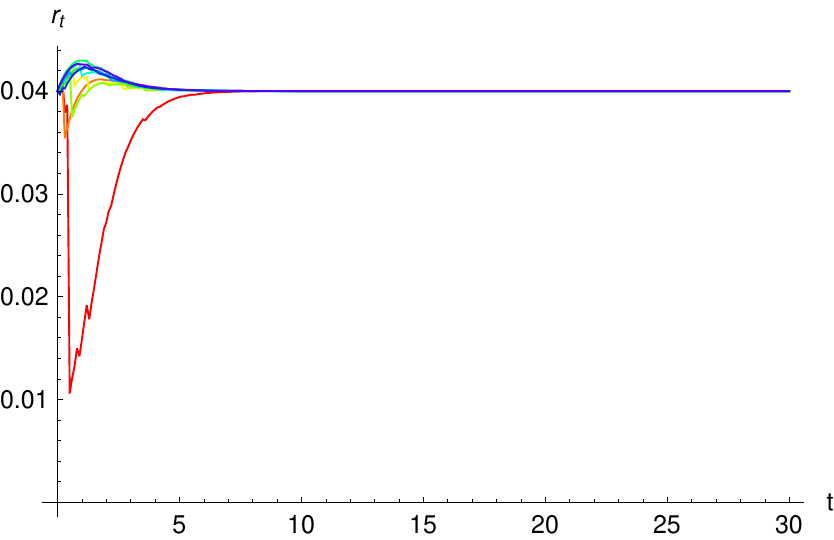}
\caption{Short rate sample paths for the Brownian-gamma model with $h(u,X) = c\exp{[-bu(1-X)]}$ and $X=\{0,1\}$. We let $m=0.5$, $\kappa=0.5$, $f_0(1)=0.5$, $\sigma=0.1$, $c=-2$ and $P_{0t}=\exp{\left(-0.04 t\right)}$. We choose $(i)\; b = 0$, $(ii)\; b=0.005$ and $(iii)\;b=1$.}\vspace{-0.5cm}
\label{figGBb}
\end{center}
\end{figure}
}
\end{example}
\noindent Compared to Example \ref{BrownianMotionExampleOne}, this model is more robust to variation in the values of the parameters. An analysis of the sample trajectories suggests that for large $t$, the short rate reverts to the initial level $r_0$.


\section{Bond prices driven by filtered variance-gamma martingales}
We let $\{L_t\}$ denote a variance-gamma process. We define the variance-gamma process as a time-changed Brownian motion with drift (see Carr {\it et al.}~\cite{ccm}), that is
\begin{equation}
L_t = \theta \gamma_t + \Sigma B_{\gamma_t}
\end{equation}
with parameters $\theta\in\mathbb{R}$, $\Sigma>0$ and $\nu>0$. Here $\{\gamma_t\}$ is a gamma process with rate and scale parameters $m=1/\nu$ and $\kappa = \nu$ respectively, and $\{B_{\gamma_t}\}$ is a subordinated Brownian motion. The randomised Esscher martingale is expressed by
\begin{equation}
M_{tu}(X) = \exp{\left[h(u,X)L_t\right]}\left(1-\theta\nu h(u,X)-\tfrac{1}{2}\Sigma^2\nu h^2(u,X)\right)^{t/\nu},
\end{equation}
and the associated filtered Esscher martingale is of the form
\begin{equation}\label{hatmvg}
\hM_{tu} = \int_{-\infty}^\infty f_t(x) \exp{\left[h(u,x)L_t\right]}\left(1-\theta\nu h(u,x)-\tfrac{1}{2}\Sigma^2\nu h^2(u,x)\right)^{t/\nu}\rd x,
\end{equation}
where $f_t(x)$ may be given for example by (\ref{densityprocess}) or a special case thereof, or by (\ref{densitygamma}) depending on the type of information used to filter knowledge about $X$. This leads to the following expression for the discount bond price process:
\begin{equation}
P_{tT} = \frac{\int_T^\infty \rho(u)\, \int_{-\infty}^\infty f_t(x)\,\exp{\left[h(u,x)L_t\right]}\left(1-\theta\nu h(u,x)-\tfrac{1}{2}\Sigma^2\nu h^2(u,x)\right)^{t/\nu}\,\rd x\,\rd u}{\int_t^\infty \rho(v)\, \int_{-\infty}^\infty f_t(y)\,\exp{\left[h(v,y)L_t\right]}\left(1-\theta\nu h(v,y)-\tfrac{1}{2}\Sigma^2\nu h^2(v,y)\right)^{t/\nu}\,\rd y\,\rd v}.
\end{equation}
We can also obtain an expression for the short rate of interest by substituting (\ref{hatmvg}) into (\ref{fhshort}). We now present another explicit bond pricing model.

\begin{example}\label{VGmod}
{\rm
We assume that $X$ is a random time, and hence a positive random variable taking discrete values $\{x_1, \ldots, x_n\}$ with \textit{a priori} probabilities $\{f_0(x_1), \ldots, f_0(x_n)\}$. We suppose that the information process $\{I_t\}$ is independent of $\{L_t\}$, and that it is defined by
\begin{equation}
I_t = \sigma X t + B_t.
\end{equation}
We take the random mixer to be
\begin{equation}
h(u,X) = c\exp{\left[-b(u-X)^2\right]}
\end{equation}
where $b>0$ and $c\in\mathbb{R}$. We see in Figure \ref{hwaves} that the random mixer, and thus the weight of the variance-gamma process, increases (in absolute value) until the random time $X$, and decreases (in absolute value) thereafter.
\begin{figure}[H]
\begin{center}
\includegraphics[scale=0.55]{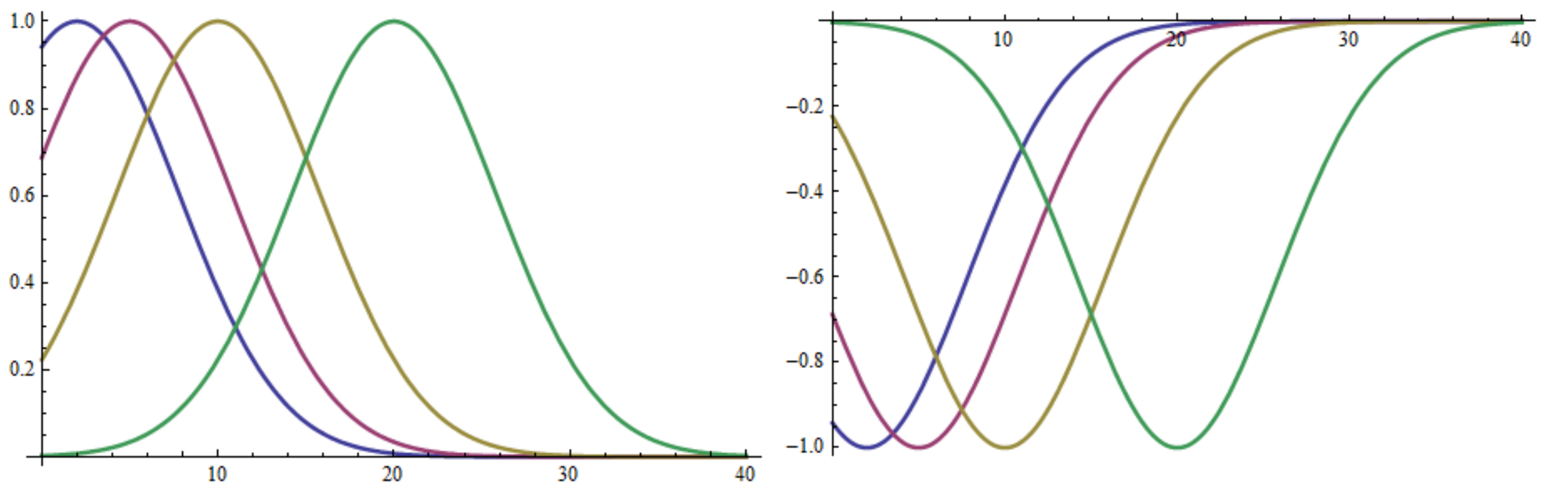}
\caption{Plot of $h(u,x_i)$ for $x_1 = 2$, $x_2 = 5$, $x_3 = 10$ and $x_4=20$, where $b=0.015$ and $c=1$ (left) and $c=-1$ (right).}
\label{hwaves}
\end{center}
\end{figure}
\vspace{-0.2cm}
\noindent The associated bond price and interest rate processes have the following sample paths:

\begin{figure}[H]
\begin{center}
\includegraphics[scale=0.35]{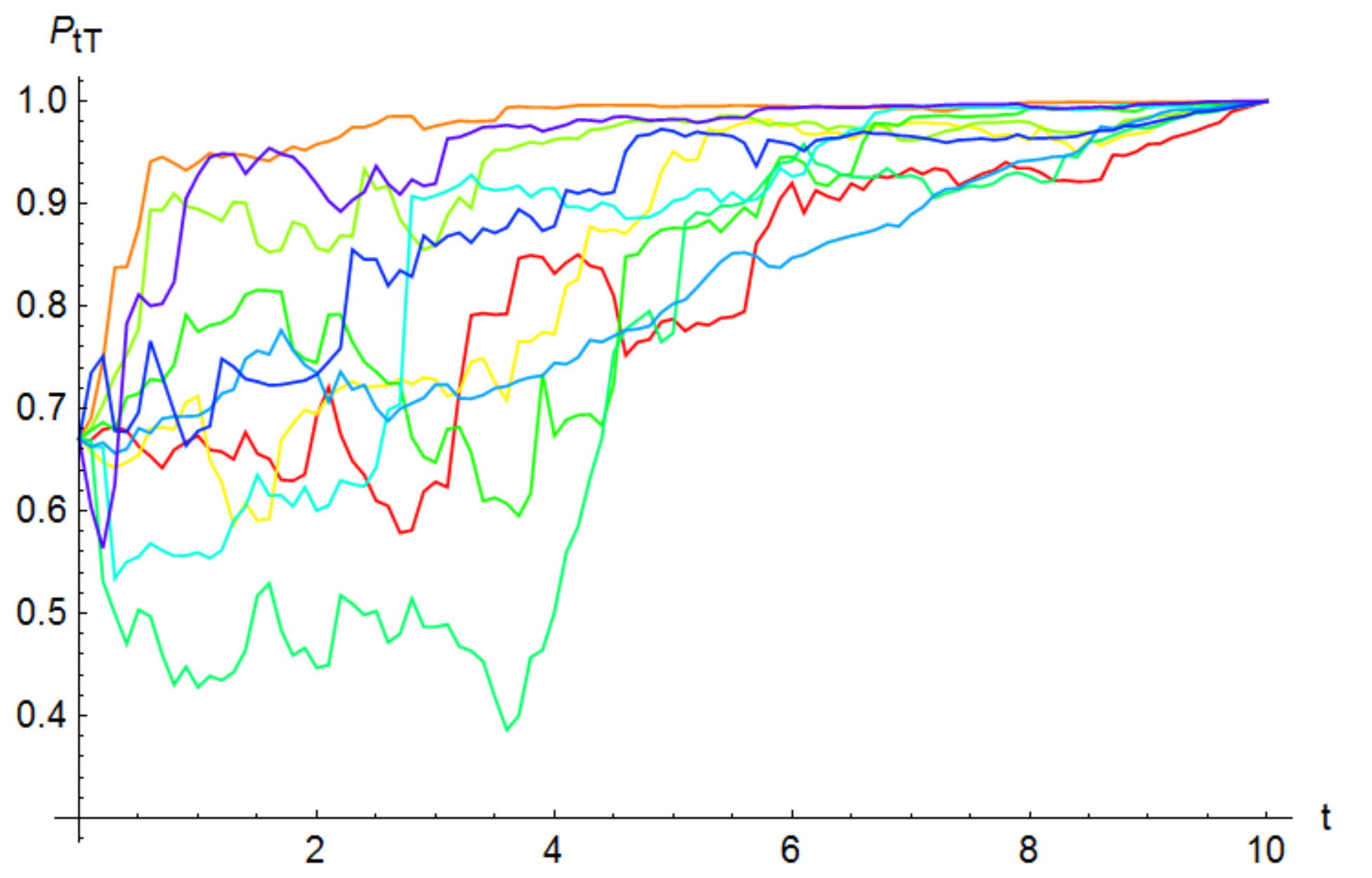}
\includegraphics[scale=0.35]{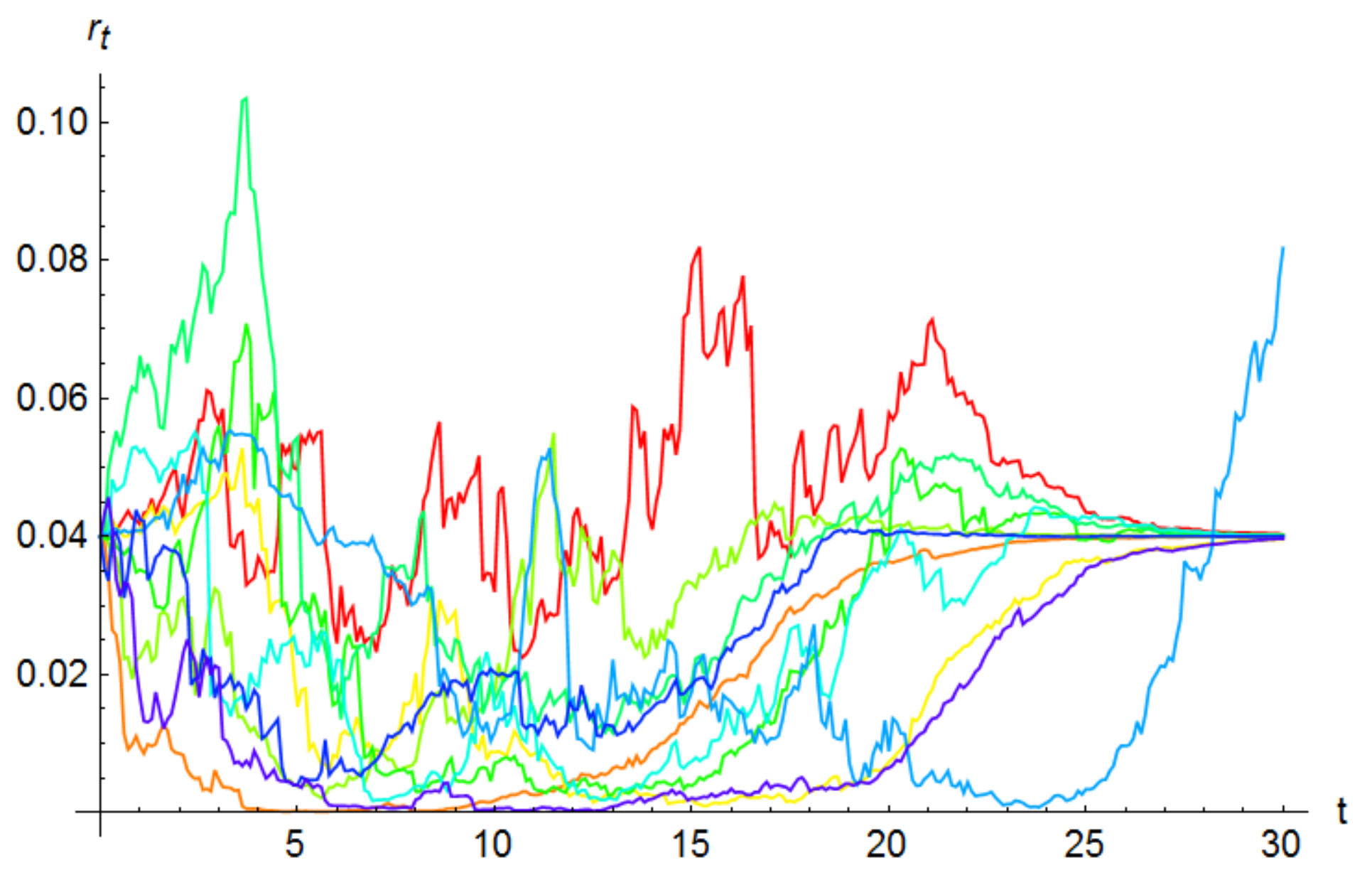}
\caption{Sample paths for a discount bond with $T=10$ and the short rate. We use the variance-gamma model with 
$h(u,X) = c\exp{[-b(u-X)^2]}$. We let $\theta=-1.5$, $\Sigma=2$ and $\nu= 0.25$. We set $f_0(x_1)=0.2$, $f_0(x_2)=0.35$, $f_0(x_3)=0.35$, $f_0(x_4)=0.1$ and $x_1=2$, $x_2=5$, $x_3=10$, $x_4=20$. We choose $\sigma=0.1$, $c=0.5$, $b=0.015$ and the initial term structure is $P_{0t} = \exp{(-0.04t)}$.}
\label{vgbondir}
\end{center}
\end{figure}
\noindent We observe that over time the sample paths of the interest rate process revert to the initial level $r_0$. However, some paths may revert to $r_0$ at a later time than others, depending on the realized value of the random variable $X$.
}
\end{example}


\section{Chameleon random mixers}
The functional form of the random mixer $h(u,X)$ strongly influences the interest rate dynamics. The choice of $h(u,X)$ also affects the robustness of the model: there are choices in which the numerical integration in the calculation of the pricing kernel does not converge. So far, we have constructed examples based on an exponential-type random mixer. However, one may wish to introduce other functional forms for $h(u,X)$ for which we can observe different behaviour in the interest rate dynamics, while maintaining robustness. For instance we may consider a random piecewise function of the form
\begin{equation}\label{piecewise}
h(u,X) = g_1(u)\indi{1}_{\{u\leq X\}} + g_2(u)\indi{1}_{\{u>X\}}
\end{equation}
where  $g_j:\re_+ \rightarrow \re$ for $j=1,2$. The random mixer now has a ``chameleon form'': initially appearing to be $g_1$, and switching its form to $g_2$ at $X=u$. This results in the martingale $\{\hM_{tu}\}$, and the resulting interest rate sample paths, exhibiting different hues over time, depending on the choices of $g_j\;(j=1,2).$ We can extend this idea further by considering (i) multiple $g_j$, or (ii) a multivariate random mixer of the form
\begin{equation}
h(u, X, Y_1, Y_2) =  g_1(u, Y_1)\indi{1}_{\{u\leq X\}} + g_2(u, Y_2)\indi{1}_{\{u>X\}},
\end{equation}
where $X>0$, $Y_1$ and $Y_2$ are independent random variables with associated information processes. In this case, the $g_j$ are themselves random-valued functions. Here $X$ can be regarded as the primary mixer which determines the timing of the regime switch. The variables $Y_i\;(i=1,2)$ can then be interpreted as the secondary mixers determining the weights of the L\'evy processes over two distinct time intervals.

\begin{example}\label{GammaModelwithBrownianInfoChameleon}
\rm{We now present what may be called the ``Brownian-gamma chameleon model''. We consider the filtered gamma martingale family (\ref{gammaMhat}) in the situation where the random mixer $h(u,X)$ has the form
\begin{equation}\label{chamh}
h(u,X) =  c_1 \sin{(\alpha_1u)}\indi{1}_{\{u\leq X\}} +  c_2\exp{\left(-\alpha_2 u\right)}\indi{1}_{\{u>X\}}
\end{equation}
where $c_1, c_2 <\kappa^{-1}$ and $\alpha_2>0$. The information process $\{I_t\}$ associated with $X$ is taken to be of the form
\begin{equation}
I_t = \sigma t X + B_t.
\end{equation}
We assume that $X$ is a positive discrete random variable taking values $\{x_1, x_2, \ldots, x_n\}$ with \textit{a priori} probabilities $f_0(x_i)$, $i=1, 2, \ldots, n$. That is, the function $h(u,X)$ will switch once from sine to exponential behaviour at one of the finitely many random times. Inserting (\ref{gammaMhat}), with the specification (\ref{chamh}), in the expression for the bond price (\ref{fhbp}), we obtain
\begin{equation}\label{chambond}
P_{tT} = \frac{\int_T^\infty \rho(u)\, \sum_{i=1}^n f_t(x_i) \left[1-\kappa \,h(u,x_i)\right]^{mt}\exp{\left[h(u,x_i)\,\gamma_t\right]}\,\rd u}{\int_t^\infty \rho(v)\, \sum_{i=1}^n f_t(y_i) \left[1-\kappa\, h(v,y_i)\right]^{mt}\exp{\left[h(v,y_i)\,\gamma_t\right]}\,\rd v},
\end{equation}
where $h(u,x_i)$ is given by (\ref{chamh}) for $X=x_i$, and 
\begin{equation}
f_t(x_i) = \frac{f_0(x_i)\,\exp\left[\sigma x_i I_t-\tfrac{1}{2}\sigma^2 x_i^2 t\right]}{\sum_{i=1}^n f_0(y_i)\,\exp\left[\sigma y_i I_t -\frac{1}{2}\sigma^2 y_i^2 t\right]}.
\end{equation}
Since the sine function oscillates periodically within the interval $[-1,1]$, the integrals in (\ref{chambond}) may not necessarily converge to one value. However, at some finite random time $u=X$, the sine behaviour is replaced by an exponential decay; this ensures the integrals in the expression for the bond price converge. Such a behaviour may be viewed as a regime switch at a random time. In the simulation below, the analysis of the model parameters is analogous to the one in Example \ref{GammaExamplewithBrownianInfo}. It is worth emphasizing nevertheless that (i) the \textit{a priori} probabilities $f_0(x_i),$\; $i=1,2,\ldots, n$ have a direct influence on the length of the time span during which the sine function in the chameleon mixer is activated, (ii) the magnitude of $\alpha_1$ determines the frequency of the sine wave, while $\alpha_2$ affects the rate at which reversion to the initial interest rate (in the simulation below $r_0 = 4\%$) occurs, and(iii) the size of $c_1$ determines the amplitude of the sine, and it significantly impacts the convergence of the numerical integration. We find that reasonable results are obtained for $-\kappa^{-1} < c_1 < \kappa^{-1}$.

\begin{figure}[H]
\begin{center}
\includegraphics[scale=0.835]{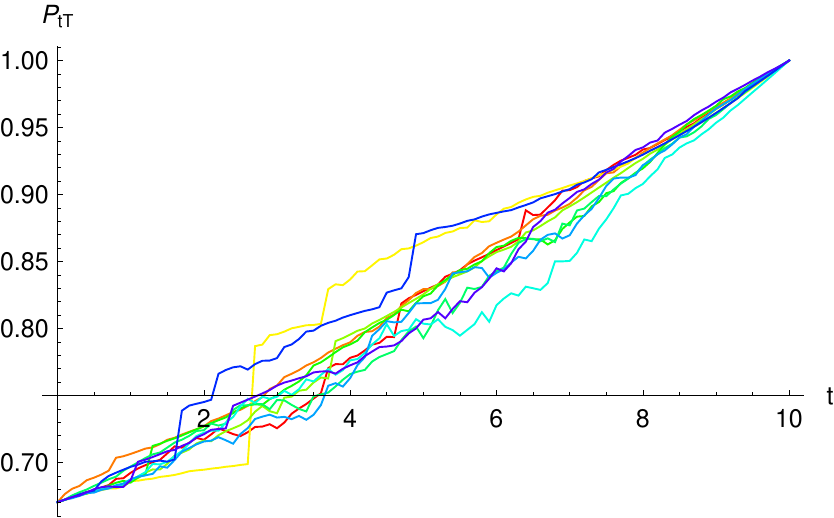}
\includegraphics[scale=0.835]{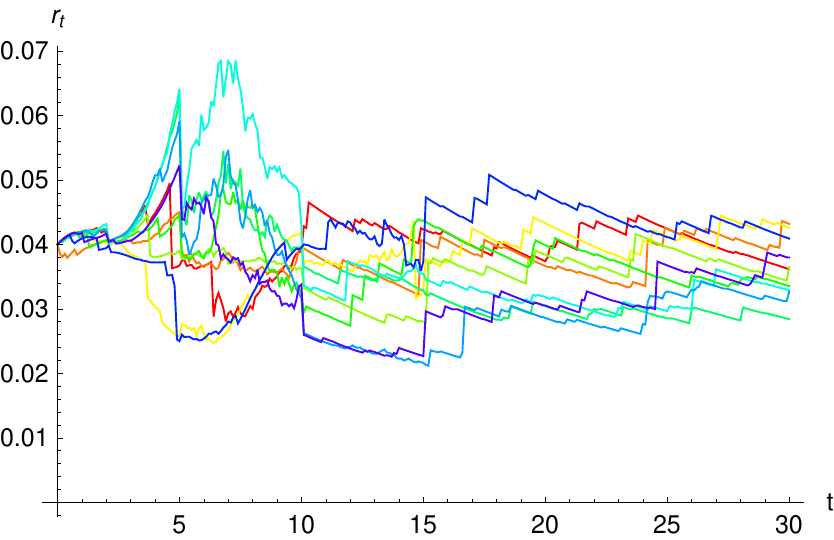}
\caption{Sample paths of discount bond with $T=10$ and short rate trajectories. We use the Brownian-gamma chameleon model with $h(u,X) =   c_1 \sin{(\alpha_1u)}\indi{1}_{\{u\leq X\}} +  c_2\exp{\left(-\alpha_2 u\right)}\indi{1}_{\{u>X\}}$. Let $X$ take the values $\{x_1 = 2,\, x_2 = 5,\, x_3 = 10,\, x_4 = 15\}$ with \textit{a priori} probabilities $\{f_0(x_1) = 0.2,\, f_0(x_2) = 0.35,\, f_0(x_3) = 0.35,\, f_0(x_4) = 0.1\}$. We set $m=0.5$, $\kappa=0.5$, $\sigma=0.1$, $c_1=0.2625$, $c_2=0.75$, $\alpha_1 = 0.75$, $\alpha_2 = 0.02$ and $P_{0t}=\exp{\left(-0.04 t\right)}$.}\vspace{-0.5cm}
\label{chambondir}
\end{center}
\end{figure}
}
\end{example}


\section{Model-generated yield curves}
The yield curve at any time is defined as the range of yields that investors in sovereign debt can expect to receive on investments over various terms to maturity. For a calendar date $t$ and a time to maturity $\tau$, we let $Y_{t, t+\tau}$ be the continuously compounded zero-coupon spot rate for time to maturity $\tau$, that is, the map
$\tau \mapsto Y_{t, t+\tau}$. We write
\begin{equation}
P_{t,t+\tau} = \exp{\left(-\tau Y_{t,t+\tau}\right)}.
\end{equation}
Typically, the following yield curve movements are observed: (i) parallel shifts of the yield curve corresponding to an equal increase in yields across all maturities; (ii) steepening (flattening) of the yield curve, that is the difference between the yields for longer-dated bonds and shorter-dated bonds widens (narrows), and (iii) changes in the curvature and overall shape of the yield curve. The terms ``shift'', ``twist'' and ``butterfly'' are also used to describe these yield curve movements.

As shown in Figure \ref{BGYieldCurves} below, the two-factor Brownian-gamma model set-up in Example \ref{GammaExamplewithBrownianInfo} is indeed too rigid to allow for significant changes in the shape of the yield curve. For $f_0(1)=1$, the yield curve is flat at all times. For $0\leq f_0(1) <1$, this model can generate flat, upward sloping yield curves and in certain cases, slightly inverted yield curves. The variance-gamma model (Figure \ref{VGYieldcurves}) and the Brownian-gamma chameleon model (Figure \ref{ChamBGYieldCurves}) show more flexibility, where changes of slope and different yield curve shapes are observed. These model may generate flat, upward sloping, inverted and humped yield curves. We emphasise that these classes of models are able to capture all three types of yield curve movements.

\begin{figure}[H]
\begin{center}
\includegraphics[scale=0.65]{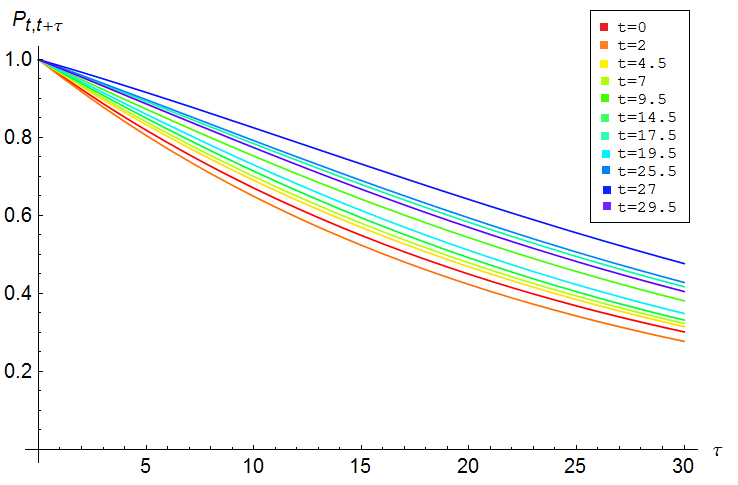}
\vspace{-0.1cm}
\caption{Discount bond curves for the Brownian-gamma model. We let $X= \{0,1\}$ with $f_0(1)=0.3$. We let $m=2$, $\kappa=0.2$, $\sigma = 0.1$, $c=-2$, $b=0.03$, $P_{0t}=\exp{\left(-0.04 t\right)}$.}\label{BGBondCurves}\vspace{-0.5cm}
\end{center}
\end{figure}

\begin{figure}[H]
\begin{center}
\includegraphics[scale=0.65]{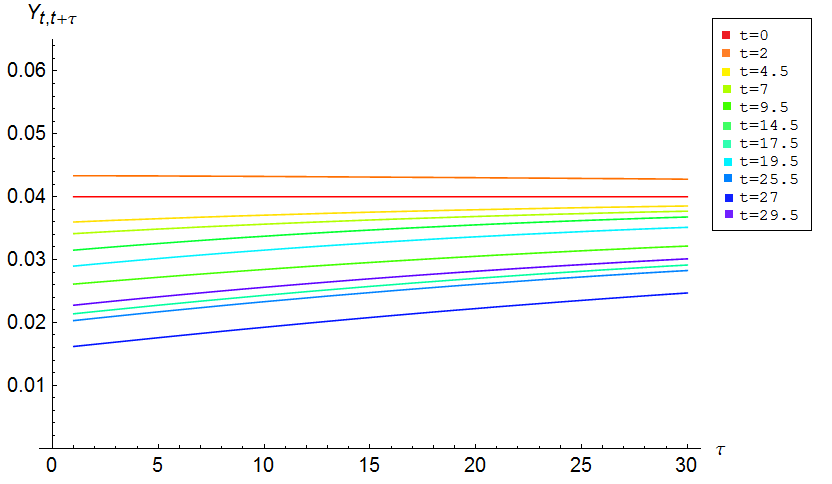}
\vspace{-0.1cm}
\caption{Yield curves for the Brownian-gamma model. We let $X= \{0,1\}$ with $f_0(1)=0.3$. We let $m=2$, $\kappa=0.2$, $\sigma = 0.1$, $c=-2$, $b=0.03$, $P_{0t}=\exp{\left(-0.04 t\right)}$.}\label{BGYieldCurves}\vspace{-0.5cm}
\end{center}
\end{figure}

\begin{figure}[H]
\begin{center}
\includegraphics[scale=0.58]{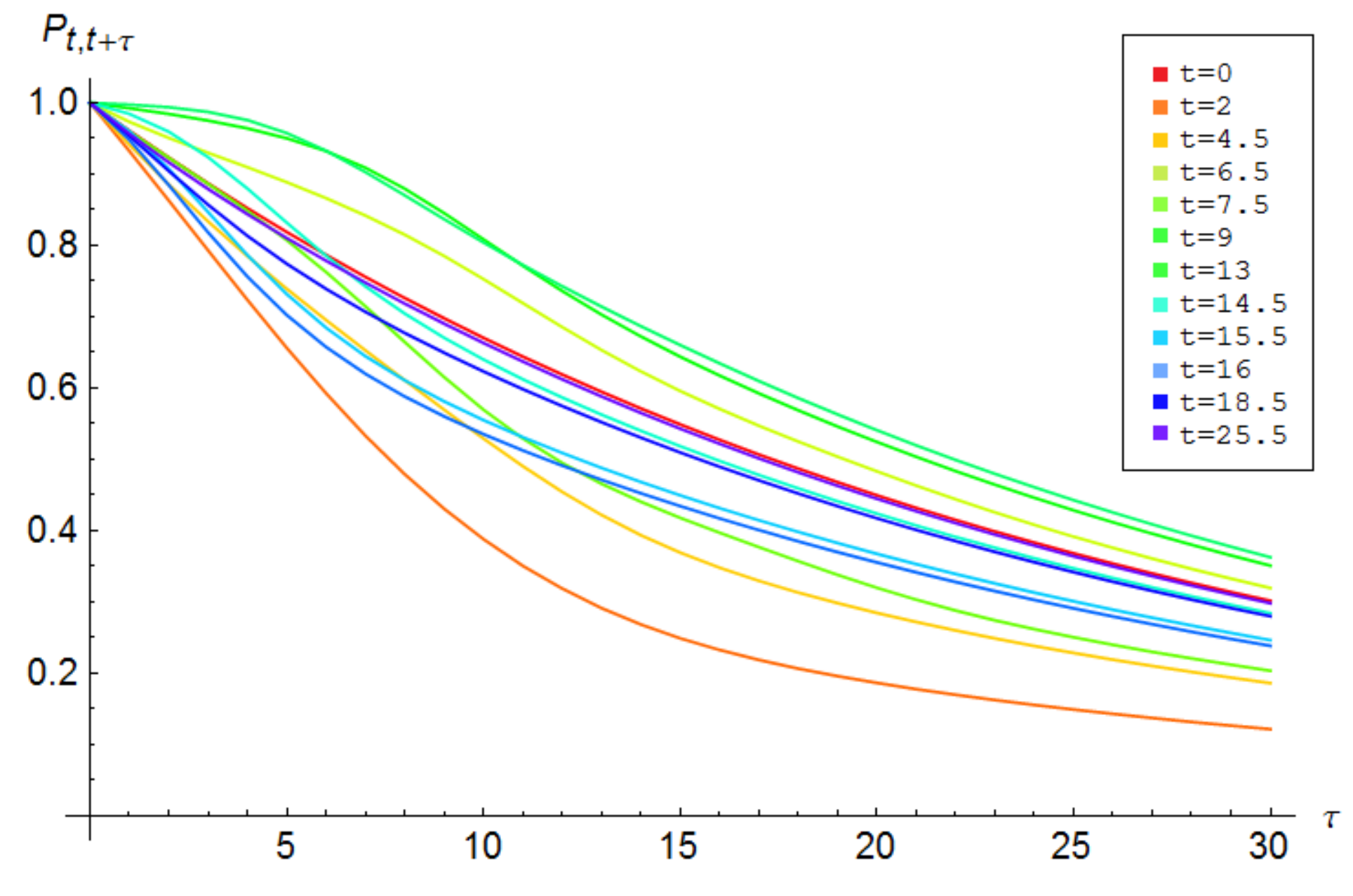}
\caption{Discount bond curves for the variance-gamma model with 
$h(u,X) = c\exp{[-b(u-X)^2]}$. We let $\theta=-1.5$, $\Sigma=2$ and $\nu= 0.25$. We set $f_0(x_1)=0.2$, $f_0(x_2)=0.35$, $f_0(x_3)=0.35$, $f_0(x_4)=0.1$ and $x_1=2$, $x_2=5$, $x_3=10$, $x_4=20$. We choose $\sigma=0.1$, $c=0.5$, $b=0.015$ and the initial term structure is $P_{0t} = \exp{(-0.04t)}$.}
\label{VGBondcurves}
\end{center}
\end{figure}

\begin{figure}[H]
\begin{center}
\includegraphics[scale=0.58]{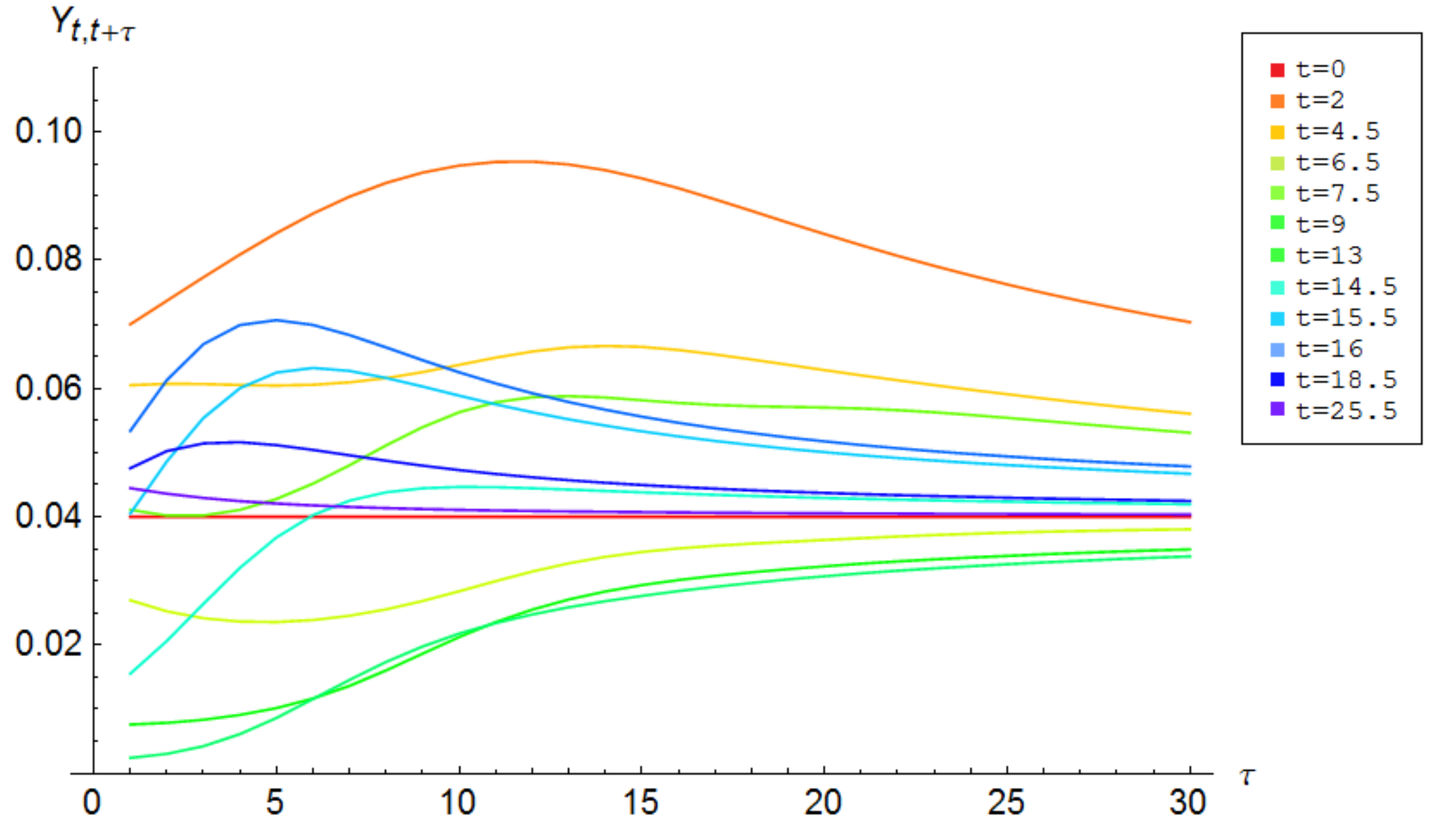}
\caption{Yield curves for the variance-gamma model where 
$h(u,X) = c\exp{[-b(u-X)^2]}$. We let $\theta=-1.5$, $\Sigma=2$ and $\nu= 0.25$. We set $f_0(x_1)=0.2$, $f_0(x_2)=0.35$, $f_0(x_3)=0.35$, $f_0(x_4)=0.1$ and $x_1=2$, $x_2=5$, $x_3=10$, $x_4=20$. We choose $\sigma=0.1$, $c=0.5$, $b=0.015$ and the initial term structure is $P_{0t} = \exp{(-0.04t)}$.}
\label{VGYieldcurves}
\end{center}
\end{figure}

\begin{figure}[H]
\begin{center}
\includegraphics[scale=0.62]{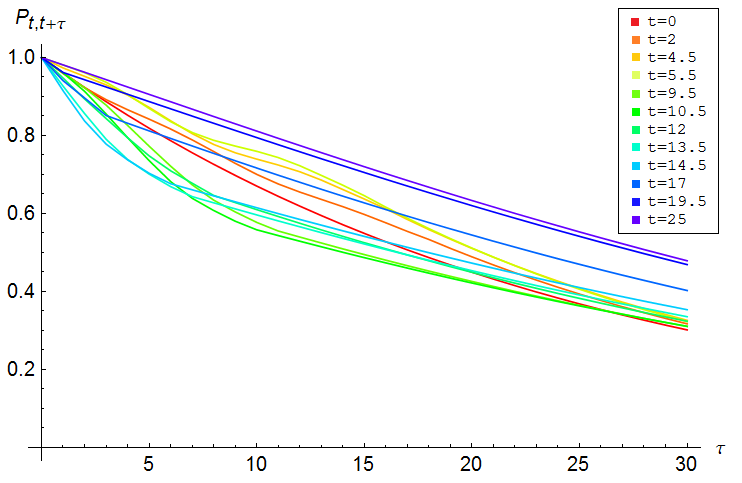}
\caption{Discount bond curves for the Brownian-gamma chameleon model. We let $X= \{x_1=2, x_2=5, x_3=10, x_4=20\}$ with $f_0(x_1)=0.15$, $f_0(x_2)=0.35$, $f_0(x_3)=0.35$, $f_0(x_4)=0.15$. We let $m=0.5$, $\kappa=0.5$, $\sigma = 0.1$, $c_1=-0.4375$, $c_2=-1.25$, $\alpha_1=0.75$, $\alpha_2=0.02$, $P_{0t}=\exp{\left(-0.04 t\right)}$.}
\label{ChamBGBondCurves}\vspace{-0.5cm}
\end{center}
\end{figure}

\begin{figure}[H]
\begin{center}
\includegraphics[scale=0.62]{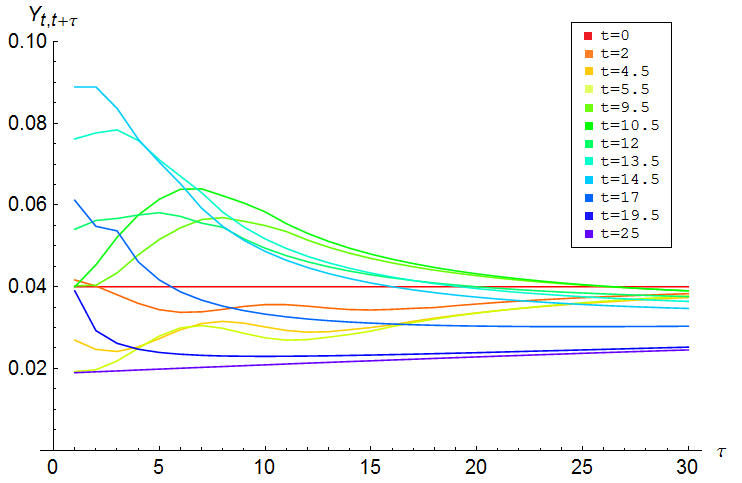}
\caption{Yield curves for the Brownian-gamma chameleon model. We let $X= \{x_1=2, x_2=5, x_3=10, x_4=20\}$ with $f_0(x_1)=0.15$, $f_0(x_2)=0.35$, $f_0(x_3)=0.35$, $f_0(x_4)=0.15$. We let $m=0.5$, $\kappa=0.5$, $\sigma = 0.1$, $c_1=-0.4375$, $c_2=-1.25$, $\alpha_1=0.75$, $\alpha_2=0.02$, $P_{0t}=\exp{\left(-0.04 t\right)}$.}
\label{ChamBGYieldCurves}\vspace{-0.5cm}
\end{center}
\end{figure}


\section{Pricing of European-style bond options}
Let $\{C_{st}\}_{0\leq s\leq t<T}$ be the price process of a European call option with maturity $t$ and strike $0<K<1$, written on a discount bond with price process $\{P_{tT}\}_{0\leq t\le T}$. The price of the option at time $s$ is given by
\begin{equation}\label{call}
C_{st} = \frac{1}{\pi_s}\,\EP\left[\pi_t (P_{tT}-K)^+\,|\,\mathcal{F}_s\right].
\end{equation}
By substituting (\ref{fhpk}) and (\ref{fhbp}) into (\ref{call}), we obtain
\begin{equation}
C_{st} = \frac{1}{\pi_s}\EP\left[\left(\int_T^\infty \rho(u)\, \hM_{tu} \,\rd u - K \int_t^ \infty \rho(u)\, \hM_{tu}\, \rd u\right)^+\,\bigg|\, \cF_s\right].
\end{equation}
In the single-factor models that we have considered with a Markovian information process $\{I_t\}$ , we can define the region $\mathcal{V}$ by
\begin{equation}
\mathcal{V} := \left\{y, z: \int_T^\infty \rho(u)\, \hM_{tu}(L_t = y, I_t = z)\,\rd u - K\int_t^\infty \rho(u)\, \hM_{tu}(L_t = y, I_t = z)\,\rd u > 0 \right\}.
\end{equation}
It follows that the price of the call option is 
\begin{equation}
C_{st} = \frac{1}{\pi_s}\int\int_{\mathcal{V}} \left(\int_T^\infty \rho(u)\, \hM_{tu}(y, z)\,\rd u - K\int_t^\infty \rho(u)\, \hM_{tu}(y, z)\,\rd u \right) q_s(y, z)\, \rd y\, \rd z
\end{equation}
where
\begin{eqnarray}
q_s(y, z) = \frac{\partial^2}{\partial y\,\partial z} \pr \left[L_t \leq y, I_t \leq z\, |\, \cF_s\right].
\end{eqnarray}
We can use Fubini's theorem to write this more compactly in the form
\begin{equation}
C_{st} = \frac{1}{\pi_s}\left(\int_T^\infty \rho(u)\,\Phi_{tu}\,\rd u - K\int_t^\infty \rho(u)\,\Phi_{tu}\,\rd u\right),
\end{equation}
where 
\begin{equation}
\Phi_{tu} = \int \int_\mathcal{V} \hM_{tu}(y,z)\,q_s(y, z)\,\rd y\, \rd z.
\end{equation}
We apply Monte Carlo techniques to simulate option price surfaces. A large number of iterations is required to obtain accurate estimates. To increase precision, variance reduction techniques or quasi-Monte Carlo methods can be considered (see Boyle {\it et al}.~\cite{bbg}). The choice of the random mixer affects the shape of the resulting option price surface. The simulations in Figure \ref{Surfaces} are based on (i) the Brownian-gamma model constructed in Example \ref{GammaExamplewithBrownianInfo}, and (ii) the Brownian-gamma chameleon model in Example \ref{GammaModelwithBrownianInfoChameleon}. The wave across the second option price surface is produced by the sine function that defines part of the chameleon random mixer.
\begin{figure}[H]
\includegraphics[scale=0.56]{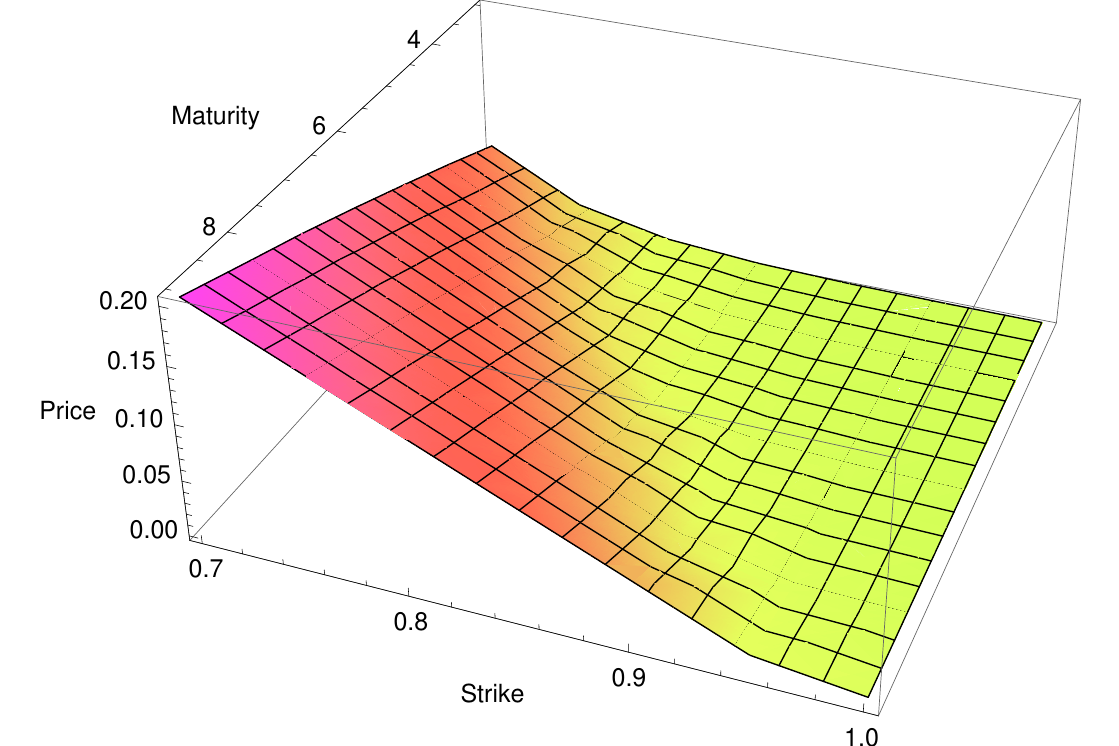}
\includegraphics[scale=0.56]{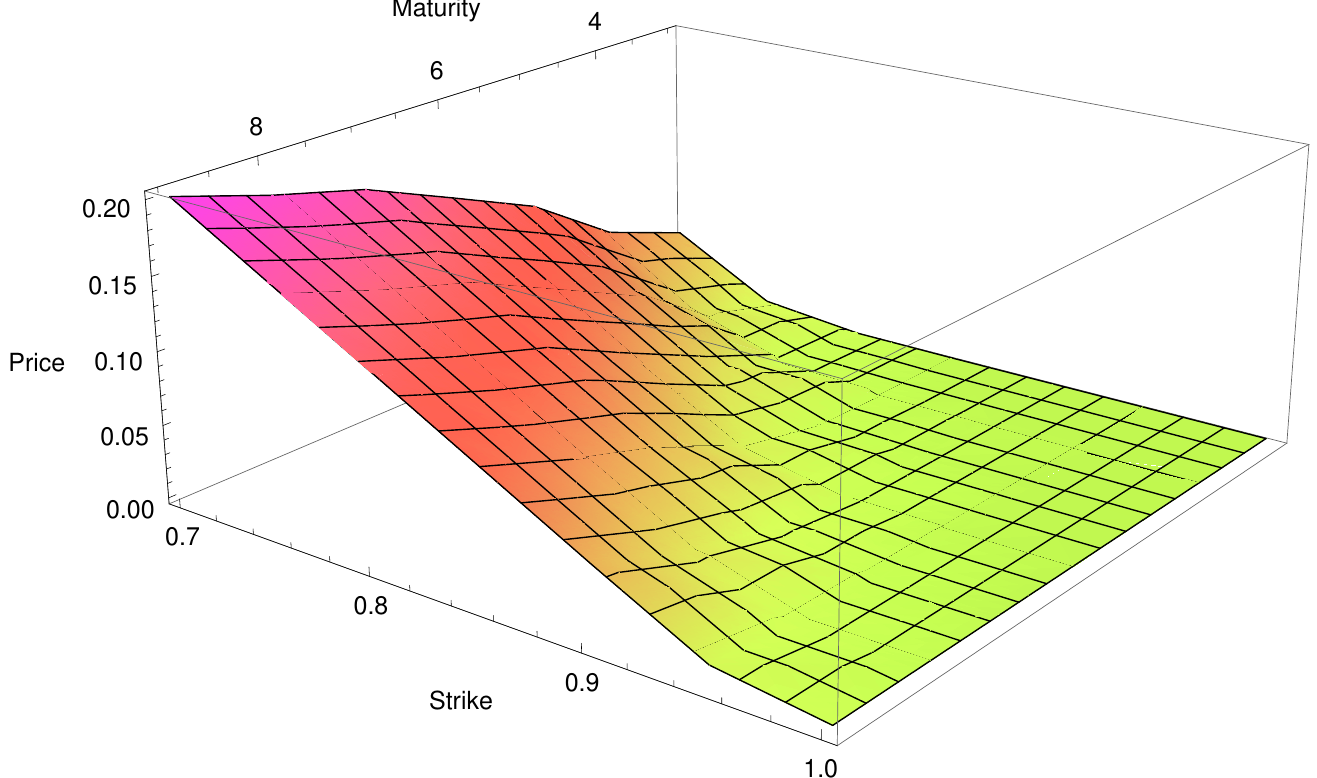}
\begin{center}
\caption{Option price surface at $s=2$ of call options on a discount bond with $T=10$. \;\;\; \;\;\; $(i)$ Simulation based on the Brownian-gamma model. We set $X=\{0,1\}$ with $f_0(1)=0.5$, $m=0.5$, $\kappa=0.5$, $\sigma=0.1$, $c=-2$, $b=0.03$ and $P_{0t}=\exp{\left(-0.04 t\right)}$. $(ii)$ Simulation based on the Brownian-gamma chameleon model. We set $X=\{x_1=2, x_2 = 5, x_3 = 10, x_4 = 20\}$ with $f_0(x_1)=0.15$, $f_0(x_2)=0.35$, $f_0(x_3)=0.35$, $f_0(x_4)=0.15$, $m=0.5$, $\kappa=0.5$, $\sigma=0.1$, $c_1= 0.35$, $c_2=1$, $\alpha_1 =3$, $\alpha_2=0.03$, and $P_{0t}=\exp{\left(-0.04 t\right)}$. }\label{Surfaces}\vspace{-0.5cm}
\end{center}
\end{figure}


\section{Randomised heat kernel interest rate models}
In Sections 2 and 3, we constructed martingales based on L\'evy processes and an Esscher-type formulation. We
recall that the pricing kernel is modelled by
\begin{eqnarray}
\pi_t &=& \int_t^\infty \rho(u)\,\EP\left[M_{tu}\left(X, L_t\right)\,|\,\cF_t\right]\,\rd u\nn\\
&=& \int_{-\infty}^\infty {\int_t^\infty \rho(u)\,M_{tu}\left(x, L_t\right)\,\rd u}\,f_t(x)\,\rd x.
\end{eqnarray}
The process $\{M_{tu}\left(X, L_t\right)\}$ is a unit-initialized positive $\{\mathcal{G}_t\}$-martingale, and the process 
\begin{equation}
S_t\left(X, L_t\right) := \int_t^\infty \rho(u)\, M_{tu}\left(X, L_t\right)\,\rd u
\end{equation}
is a positive $\{\mathcal{G}_t\}$-supermartingale. The projection of a positive $\{\mathcal{G}_t\}$-supermartingale onto $\{\mathcal{F}_t\}$, that is
\begin{equation}
\pi_t := \EP\left[S_t\left(X, L_t\right)\,|\,\cF_t\right],
\end{equation} 
is an $\{\mathcal{F}_t\}$-supermartingale (F\"ollmer \& Protter \cite{fp}, Theorem 3). 
\\\\
\noindent \textbf{Weighted heat kernel approach.}\; We now model the impact of uncertainty on a financial market by a process that has the Markov property with respect to its natural filtration, and which we denote $\{Y_t\}_{t\geq 0}$. Of course, the case where $\{Y_t\}$ is a L\'evy process, which is a Markov process of Feller type, is included (see Applebaum \cite{app}). 
 
\newtheorem{defprop2}[definition]{Definition}
\begin{defprop2}
Let $\{Y_t\}$ be a Markov process with respect to its natural filtration. A measurable function $p:\re_+\times\re_+\times\re \rightarrow \re$ is a propagator if it satisfies
\begin{equation}
\mathbb{E}\left[p\left(t, v, Y_t\right)\,|\,Y_s \right] = p\left(s, v+t-s, Y_s\right)
\end{equation}
for $(v,t) \in \re_+\times\re_+$ and $0\leq s\leq t$.\\
\end{defprop2}
Next, let $\{n_t\}_{t\geq 0}$ be a pure noise process, and let the filtration $\{\mathcal{G}_t\}$ be generated by
\begin{eqnarray}
\mathcal{G}_t = \sigma\left(\{Y_s\}_{0\leq s\leq t}, \{n_s\}_{0\leq s\leq t},\, X\right),
\end{eqnarray}
where $\{Y_t\}$, $\{n_t\}$, and the random variable $X$ are all independent. Let $G(\cdot)$ be a positive bounded function\footnote{Once a Markov process $\{Y_t\}$ has been chosen, it may be sufficient to relax the boundedness condition, and choose $G(\cdot)$ to be a positive and integrable function.}, and let $h:\re_+\times \re \rightarrow\re.$ Then we set
\begin{equation}\label{propagator}
p(t, v, Y_t, X) := \EP\left[G\left(h(t+v, X), Y_{t+v}\right)\,|\,\mathcal{G}_t\right].
\end{equation}
This is a $\{\mathcal{G}_t\}$-propagator since $X$ is $\mathcal{G}_0$-measurable. It follows that
\begin{equation}
S_t(X, Y_t) := \int_0^\infty w(t,v)\,\EP\left[G(h(t+v, X), Y_{t+v})\,|\, \mathcal{G}_t\right]\,\rd v\label{five}
\end{equation}
is a $\{\mathcal{G}_t\}$-supermartingale, see Akahori {\it et al.}~\cite{ahtt}. Here $w(t,v)$ is a positive function that satisfies
\begin{equation}\label{weightineq}
w(t,v-s) \leq w(t-s,v)
\end{equation}
for arbitrary $t,v\in\re_+$ and $s\leq t\wedge v$. Now we define the market filtration $\{\cF_t\}$ by 
\begin{equation}
\cF_t = \sigma\left(\{Y_s\}_{0\leq s\leq t}, \{I_s\}_{0\leq s\leq t}\right),
\end{equation}
where $\{I_t\}$ carries information about $X$, which is distorted by the pure noise $\{n_t\}$. We have that $\cF_t \subset \mathcal{G}_t$. Then, by F\"ollmer \& Protter \cite{fp} Theorem 3, the projection
\begin{equation}
\pi_t := \EP\left[S_t(X, Y_t)\,|\,\cF_t\right]
\end{equation}
is an $\{\mathcal{F}_t\}$-supermartingale. It follows that
\begin{eqnarray}
\pi_t &=& \EP\left[\int_0^\infty w(t,v)\,\EP\left[G\left(h(t+v, X), Y_{t+v}\right)\,|\,\mathcal{G}_t\right]\,\rd v\,\bigg|\,\cF_t\right],\nn\\
&=& \int_0^\infty w(t,v)\,\EP\left[\EP\left[G(h(t+v, X), Y_{t+v})\,|\mathcal{G}_t\right]\,|\,\cF_t\right]\,\rd v,\nn\\
&=& \int_0^\infty w(t,v)\,\EP\left[G(h(t+v, X), Y_{t+v})\,|\,\cF_t\right]\,\rd v.\label{pkwhkafil}
\end{eqnarray}
We emphasize that in equation (\ref{pkwhkafil}), $\EP\left[G(h(t+v, X), Y_{t+v})\,|\,\cF_t\right]$ is not an $\{\cF_t\}$-propagator when $\{I_t\}$ is not a Markov process. Nevertheless, $\{\pi_t\}$ is a valid model for the pricing kernel, subject to regularity conditions.


\section{Quadratic model based on the Ornstein-Uhlenbeck process}
In this section, we generate term structure models by using Markov processes with dependent increments. We emphasize that such models cannot be constructed based on the filtered Esscher martingales. Let us suppose that $\{Y_t\}$ is an Ornstein-Uhlenbeck process with dynamics 
\begin{equation}
\rd Y_t = \delta (\beta - Y_t)\,\rd t + \Upsilon\,\rd W_t,
\end{equation}
where $\delta$ is the speed of reversion, $\beta$ is the long-run equilibrium value of the process and $\Upsilon$ is the volatility.
Then, for $s\leq t$, the conditional mean and conditional variance are given by
\begin{eqnarray}
\EP\left[Y_t\,|\,Y_s\right] &=& Y_s\,\exp{\left[-\delta(t-s)\right]} +\beta\left(1-\exp{\left[-\delta(t-s)\right]}\right).\label{condmean}\\
\textup{Var}\left[Y_t\,|\,Y_s\right] &=& \frac{\Upsilon^2}{2\delta}\left(1-\exp{\left[-2\delta(t-s)\right]}\right). \label{condvar}
\end{eqnarray}
Let us suppose, for a well-defined positive function $h:\re_+\times\re \rightarrow \re_+$, that
\begin{equation}
G(h(v, X), Y_{v}) = h(v,X) Y^2_v.
\end{equation}
Since $X$ is $\mathcal{G}_0$-measurable, and by applying (\ref{condmean}) and (\ref{condvar}), it follows that
\begin{eqnarray}
p(u, t, Y_t, X) &=& \EP\left[h(t+u, X) Y^2_{t+u}\,|\,\mathcal{G}_t\right],\nn\\
&=& h(t+u, X) \,\EP\left[\left(Y_{t+u}-\EP\left[Y_{t+u}\,|\,Y_t\right] + \EP\left[Y_{t+u}\,|\,Y_t\right]\right)^2\,|\,Y_t\right],\nn\\
&=& h(t+u, X)\,\left[\textup{Var}\left[Y_{t+u}\,|\,Y_t\right] +\EP\left[Y_{t+u}\,|\,Y_t\right]^2\right],\nn\\
&=& h(t+u, X)\,\left[\frac{\Upsilon^2}{2\delta}\left(1-\textup{e}^{-2\delta u}\right) + \left[Y_t\,\textup{e}^{-\delta u} +\beta\left(1-\textup{e}^{-\delta u}\right)\right]^2\right].\nn\\
\end{eqnarray}
The pricing kernel is then given by (\ref{pkwhkafil}), and we obtain
\begin{align}\label{whkapiexpanded}
\pi_t = \int_0^\infty w(t,u) \left[\frac{\Upsilon^2}{2\delta}\left(1-\textup{e}^{-2\delta u}\right) + \left[Y_t\,\textup{e}^{-\delta u} +\beta\left(1-\textup{e}^{-\delta u}\right)\right]^2\right]\nn\\
\times \int_{-\infty}^\infty h(t+u,x) f_t(x) \,\rd x\,\rd u.
\end{align}
It follows that the price of a discount bond is expressed by
\begin{eqnarray}
P_{tT} = \frac{1}{\pi_t}\,\EP\left[\int_0^\infty w(T,v) \,\EP\left[G(h\left(T+v,X),Y_{T+v}\right)\,|\,\cF_T\right]\,\rd v\;\bigg|\,\cF_t\right],
\end{eqnarray}
where $\{\pi_t\}$ is given in (\ref{whkapiexpanded}), and the conditional expectation can be computed to obtain
\begin{align}
\int_0^\infty w(T,v) \, \left[\frac{\Upsilon^2}{2\delta}\left(1-\textup{e}^{-2\delta (T+v-t)}\right) + \left[Y_t\,\textup{e}^{-\delta (T+v-t)} +\beta\left(1-\textup{e}^{-\delta (T+v-t)}\right)\right]^2\right]\nn\\
\times \int_{-\infty}^\infty h(T+v,x) f_t(x) \,\rd x\,\rd v.
\end{align}

\begin{example}
{\rm
We assume that $X$ is a positive random variable that takes discrete values $\{x_1, \ldots, x_n\}$ with \textit{a priori} probabilities $\{f_0(x_1), \ldots, f_0(x_n)\}$. We suppose that the information flow $\{I_t\}$ is governed by
\begin{equation}
I_t = \sigma X t + B_t.
\end{equation}
We choose the random mixer to be
\begin{equation}
h(t+u, X) = c_1\exp{\left[-c_2(t+u-X)\right]}(t+u),
\end{equation}
where $c_1>0$ and $c_2>0$, and we assume that the weight function is
\begin{equation}
w(t,u) = \exp\left[{-j(u+t)}\right]
\end{equation}
for $j>0$. Later, in Proposition \ref{fhwhka}, we show that this model belongs to the Flesaker-Hughston class. Therefore, the short rate of interest takes the form
\begin{eqnarray}
r_t &=& \frac{e^{-jt}\,\EP\left[G(h(t,X),Y_t)\,|\,\cF_t\right]}{\int_0^\infty e^{-j(t+v)}\, \EP\left[G(h(t+v,X),Y_{t+v})\,|\,\cF_t\right]\,\rd v}.
\end{eqnarray}

Next we simulate the trajectories of the discount bond and the short rate process. We refer to Iacus \cite{iac} for the simulation of the Ornstein-Uhlenbeck process using an Euler scheme. We observe oscillations in the sample paths owing to the mean-reversion in the Markov process.

\begin{figure}[H]
\begin{center}
\includegraphics[scale=0.35]{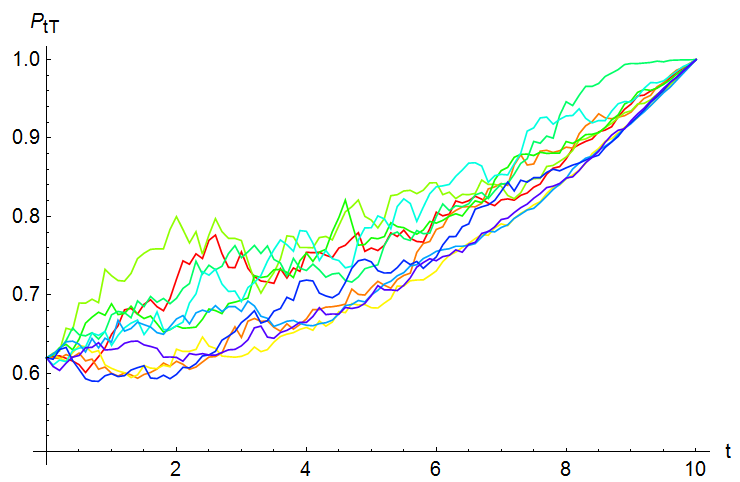}
\includegraphics[scale=0.35]{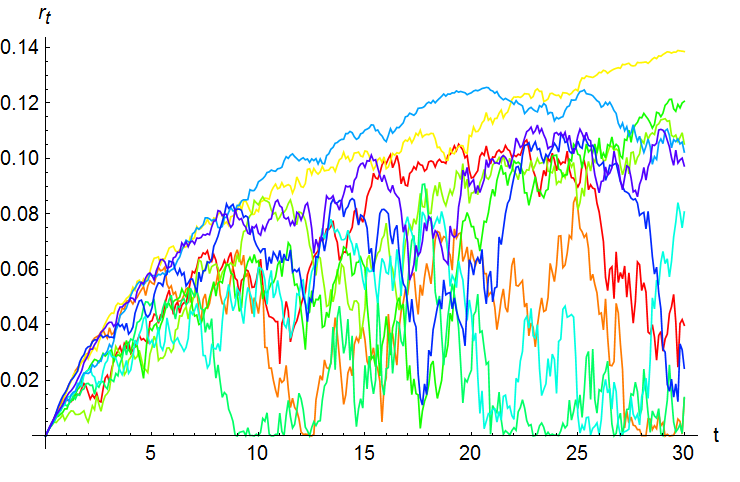}
\caption{Sample paths for a discount bond with $T=10$ and the short rate for the quadratic OU-Brownian model with 
$h(t+u,X) = c_1\exp{\left(-c_2(t+u-X)\right)}(t+u)$ with $c_1=0.02$ and $c_2=0.1$. We let $\delta=0.02$, $\beta=0.5$, $\Upsilon=0.2$ and $Y_0=1$. We let $x_1=1$ and $x_2=2$ where $f_0(x_1)=0.3$ and $f_0(x_2)=0.7$ and $\sigma=0.1$. The weight function is given by $w(t,u) = \exp{\left[-0.04(t+u)\right]}$.
}
\label{OUbondir}
\end{center}
\end{figure}
\noindent
The model-generated yield curves follow. In this example, we mostly observe changes of slope and shifts. However, it should be possible to produce changes of shape in the yield curve by varying the choices of $G(\cdot)$ and $h(\cdot)$.
\begin{figure}[H]
\begin{center}
\includegraphics[scale=0.55]{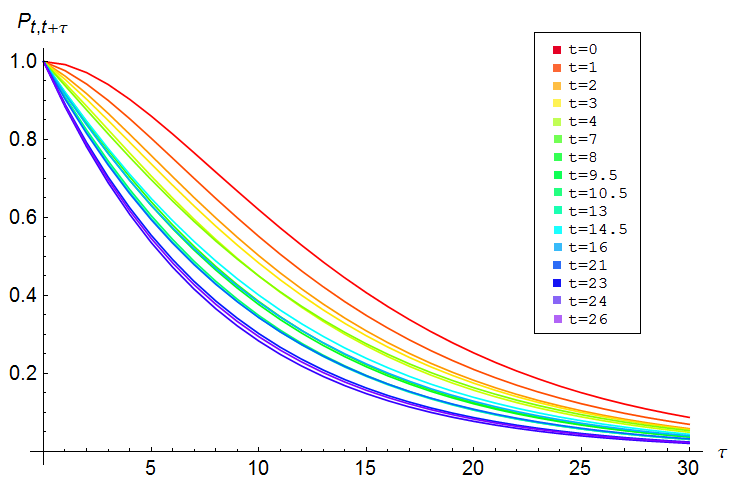}
\caption{Discount bond curves for the quadratic OU-Brownian model with 
$h(t+u,X) = c_1\exp{\left(-c_2(t+u-X)\right)}(t+u)$ with $c_1=0.01$ and $c_2=0.1$. We let $\delta=0.02$, $\beta=0.5$, $\Upsilon=0.2$ and $Y_0=1$. We let $x_1=1$ and $x_2=2$ where $f_0(x_1)=0.5$ and $f_0(x_2)=0.5$ and $\sigma=0.1$. The weight function is given by $w(t,u) = \exp{\left[-0.04(t+u)\right]}$.}
\label{OUBondcurve}
\end{center}
\end{figure}

\begin{figure}[H]
\begin{center}
\includegraphics[scale=0.55]{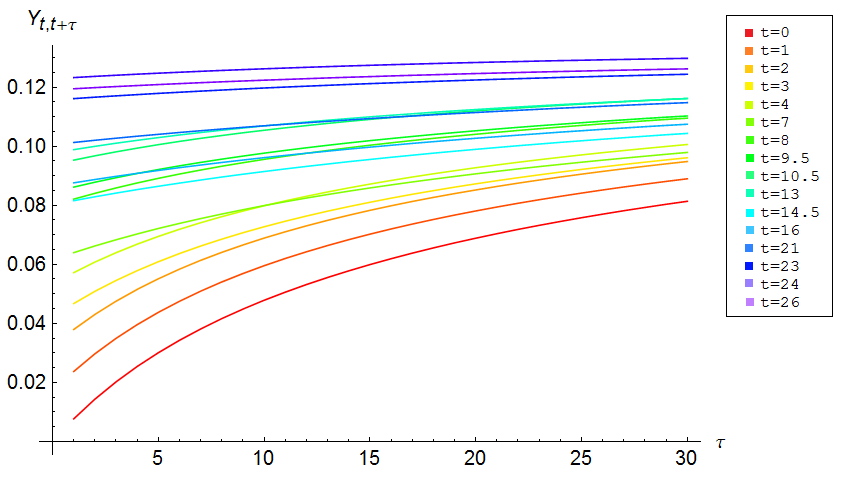}
\caption{Yield curves for the quadratic OU-Brownian model with 
$h(t+u,X) = c_1\exp{\left(-c_2(t+u-X)\right)}(t+u)$ with $c_1=0.01$ and $c_2=0.1$. We let $\delta=0.02$, $\beta=0.5$, $\Upsilon=0.2$ and $Y_0=1$. We let $x_1=1$ and $x_2=2$ where $f_0(x_1)=0.5$ and $f_0(x_2)=0.5$ and $\sigma=0.1$. The weight function is given by $w(t,u) = \exp{\left[-0.04(t+u)\right]}$.}
\label{OUYieldcurve}
\end{center}
\end{figure}
}
\end{example}
\vspace{-0.2cm}


\section{Classification of interest rate models}
In what follows, we show that, under certain conditions, the constructed pricing kernels based on weighted heat kernel models belong to the Flesaker-Hughston class.

\newtheorem{propfhwhka}[proposition]{Proposition}
\begin{propfhwhka}
\label{fhwhka}
Let $\{Y_t\}$ be a Markov process, and let the weight function be given by $w(t,v) = \psi(t+v)$ with $\psi:\re_+\rightarrow \re_+$ such that
\begin{equation}
\int_0^\infty \psi(s)\, \EP\left[G\left(h(s,X), L_{s}\right)\right]\, \rd s \;<\,\infty.
\end{equation} 
Then, the pricing kernel
\begin{equation}
\pi_t = \int_0^\infty \psi(t+v) \,\EP\left[G\left(h(t+v,X), L_{t+v}\right)\,|\,\cF_t\right]\,\rd v
\end{equation}
is a potential generated by 
\begin{equation}
A_t = \int_0^t \psi(u)\, G\left(h(u,X), L_{u}\right)\, \rd u, 
\end{equation}
that is, a potential of class (D). Thus, the pricing kernel is of the Flesaker-Hughston type. 
\end{propfhwhka}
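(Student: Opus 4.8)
The strategy is to recognise $\{\pi_t\}$ directly as a Meyer potential by displaying its generating increasing process, and then to read off the Flesaker--Hughston data using the characterisation recalled before the proposition (Hunt \& Kennedy \cite{hk}). First I would change the integration variable in the definition of $\pi_t$: setting $s = t+v$ and using $w(t,v)=\psi(t+v)$ gives
\begin{equation}
\pi_t = \int_t^\infty \psi(s)\,\EP\left[G\left(h(s,X),L_s\right)\,\big|\,\cF_t\right]\,\rd s .
\end{equation}
Next I would introduce the process $A_t = \int_0^t a_s\,\rd s$ with $a_s := \psi(s)\,\EP[G(h(s,X),L_s)\,|\,\cF_s]$. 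Note that $a_s\geq 0$ since $\psi\geq 0$ and $G$ is positive, and that $\{a_s\}$ is $\{\cF_s\}$-adapted --- the inner conditional expectation is needed precisely because $X$ is not $\cF_s$-measurable in the filtering set-up, so the bare integrand $\psi(s)G(h(s,X),L_s)$ is not itself $\{\cF_s\}$-adapted; this is the one point that genuinely requires care.

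It then remains to check the hypotheses of Meyer's criterion and the potential identity. The process $\{A_t\}$ is continuous (hence right-continuous), non-decreasing, and $A_0=0$; it is integrable because, by Tonelli and the tower property, $\EP[A_\infty] = \int_0^\infty \psi(s)\,\EP[G(h(s,X),L_s)]\,\rd s$, which is finite by the stated assumption. For the identity, apply Tonelli once more to interchange $\EP[\,\cdot\,|\,\cF_t]$ with $\int_0^\infty\rd s$ and split the range at $s=t$: by the tower property $\EP[a_s\,|\,\cF_t]=a_s$ for $s\leq t$ (since $\cF_s\subseteq\cF_t$) and $\EP[a_s\,|\,\cF_t]=\psi(s)\EP[G(h(s,X),L_s)\,|\,\cF_t]$ for $s>t$, so that $\EP[A_\infty\,|\,\cF_t] = A_t + \pi_t$. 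Hence any right-continuous version of $\EP[A_\infty\,|\,\cF_t]-A_t$ equals $\{\pi_t\}$, and by Meyer \cite{mey} it is a potential of class (D).

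Finally, since $A_t=\int_0^t a_s\,\rd s$ is absolutely continuous with nonnegative density $\{a_s\}$, the equivalence established before the proposition applies verbatim: taking $\rho(u)=\pi_0^{-1}\psi(u)\,\EP[G(h(u,X),L_u)]$ and $m_{tu}=\EP[G(h(u,X),L_u)\,|\,\cF_t]/\EP[G(h(u,X),L_u)]$ --- a unit-initialised positive $\{\cF_t\}$-martingale for each fixed $u$ by the tower property --- one obtains $\pi_t=\pi_0\int_t^\infty\rho(u)\,m_{tu}\,\rd u$, which is the Flesaker--Hughston form. The main obstacle is not any deep estimate but the measurability/Fubini bookkeeping just flagged: ensuring the generating process is honestly $\{\cF_t\}$-adapted (whence the inner $\EP[\,\cdot\,|\,\cF_s]$) and that every interchange of $\int_0^\infty\rd s$ with a conditional expectation is legitimate, both of which follow from the positivity of $G$ and $\psi$ together with the integrability hypothesis.
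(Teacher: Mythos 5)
Your proposal is correct, and it proves somewhat more than the paper's own argument does explicitly. The paper's proof goes in the opposite direction: it simply rewrites $\pi_t=\int_t^\infty \psi(u)\,\EP\left[G(h(u,X),L_u)\right]m_{tu}\,\rd u$ with $m_{tu}=\EP\left[G(h(u,X),L_u)\,|\,\cF_t\right]/\EP\left[G(h(u,X),L_u)\right]$ and $\rho(u)=\pi_0^{-1}\psi(u)\,\EP\left[G(h(u,X),L_u)\right]$, i.e.\ it exhibits the Flesaker--Hughston form directly and then relies on the equivalence recalled in Section 5 (via Hunt \& Kennedy) to conclude that such a pricing kernel is a class (D) potential. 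You instead verify Meyer's characterisation from scratch, by producing the increasing integrable process and checking the identity $\EP\left[A_\infty\,|\,\cF_t\right]-A_t=\pi_t$, and only then read off the FH data; the two arguments are the two directions of the same equivalence, so the mathematical content coincides. What your version buys is self-containedness and one genuine point of extra care: the generating process as written in the proposition's statement, $\int_0^t\psi(u)G(h(u,X),L_u)\,\rd u$, is $\{\cG_t\}$- but not $\{\cF_t\}$-adapted (since $X$ is not $\cF_t$-measurable), and your replacement of the integrand by its projection $a_s=\psi(s)\,\EP\left[G(h(s,X),L_s)\,|\,\cF_s\right]$ is exactly the correction needed for the potential identity to hold with respect to the market filtration; the tower-property and Tonelli steps you use to establish integrability and the identity are all legitimate by positivity and the stated integrability hypothesis. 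The paper's terser route sidesteps this adaptedness issue only because it never writes down the generator explicitly in the proof.
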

\begin{proof}
The function $w(t,v) = \psi(t+v)$ satisfies (\ref{weightineq}), and thus is a weight function. Then we see that
\begin{eqnarray}
\pi_t &=& \int_t^\infty \psi(u)\,\EP\left[G\left(h(u,X),L_u\right)\right] \, m_{tu}\,\rd u\nn\\
&=& \pi_0\,\int_t^\infty \rho(u)\,m_{tu}\,\rd u,
\end{eqnarray}
where
\begin{equation}
m_{tu} = \frac{\EP\left[G\left(h(u,X),L_u\right)\,|\,\cF_t\right]}{\EP\left[G\left(h(u,X),L_u\right)\right]}
\end{equation}
is a positive unit-initialized $\{\mathcal{F}_t\}$-martingale for each fixed $u\geq t$. The constant $\pi_0$ is a scaling factor. 
\end{proof}

We note that, for instance, the potential models of Rogers \cite{rog} which can be generated by the weighted heat kernel approach with $\psi(t+v) = \exp{\left[-\alpha(t+v)\right]}$ where $\alpha>0$, are Flesaker-Hughston models.
To generate potentials from the weighted heat kernel approach with a general weight $w(t,v)$, the weight function and $G(\cdot)$ should be chosen so that $\EP[\pi_t] \rightarrow 0$ as $t \rightarrow \infty$.
\\
\indent Let us suppose that $\{Y_t\}$ is a Markov process with independent increments. Then the class of Esscher-type randomised mixture models presented in this paper, for which
\begin{equation}
M_{tu}(X,L_t) := \frac{\exp{\left[h(u,X)L_{t}\right]}}{\EP\left[\exp{\left[h(u, X)L_{t}\right]}\,|\,X\right]},
\end{equation}
cannot be constructed by using the weighted heat kernel approach. We see this by setting
\begin{equation}
G\left(h(v, X), L_{t+v}\right) = \frac{\exp{\left[h(v,X)L_{t+v}\right]}}{\EP\left[\exp{\left[h(v, X)L_{t+v}\right]}\,|\,X\right]},
\end{equation}
and by observing that $\EP[G(h(v,X), L_{t+v})\,|\,\mathcal{G}_t]$ is not a $\{\mathcal{G}_t\}$-propagator. As we mentioned earlier, the class of models introduced by Brody {\it et al.}~\cite{bhmack} is included in the class of Esscher-type randomised mixture models. Similarly, models based on kernel functions of the form $G(h(x), Y_t)$ can produce other Esscher-type models by use of the weighted heat kernel approach. The following is a diagrammatic representation of the considered classes of positive interest rate models:
\begin{figure}[H]
\begin{center}
\includegraphics[scale=0.60]{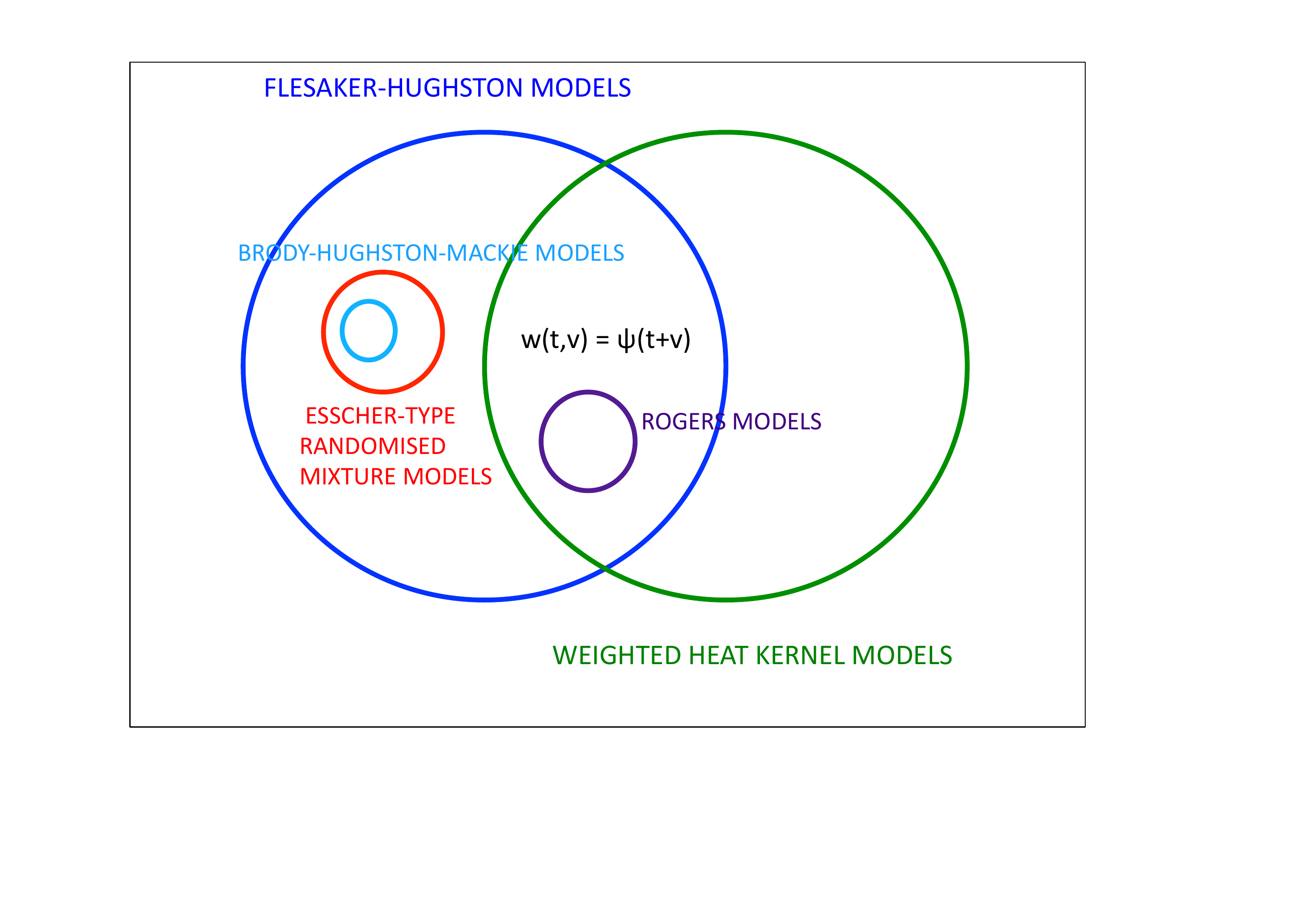}
\vspace{-0.1cm}
\caption{Classes of interest rate models.}
\end{center}
\end{figure}


We conclude with the following observations. The pricing kernel models proposed in this paper are versatile by construction, and potentially allow for many more investigations. For instance, we can think of applications to the modelling of foreign exchange rates where two pricing kernel models are selected---perhaps of different types to reflect idiosyncrasies of the considered domestic and foreign economies. In this context, it might be of particular interest to investigate dependence structures among several pricing kernel models for all the foreign economies involved in a polyhedron of FX rates. We expect the mixing function $h(u,X)$ to play a central role in the construction of dependence models. Furthermore, a recent application by Crisafi \cite{cris} of the randomised mixtures models to the pricing of inflation-linked securities may be developed further. 

\bigskip\medskip


\noindent {\bf Acknowledgments.}\, The authors are grateful to J. Akahori, D. Brigo, D. C. Brody, C. Buescu, M. A. Crisafi, M. Grasselli, L. P. Hughston, S. Jaimungal, A. Kohatsu-Higa, O. Menoukeu Pamen, J. Sekine, W. T. Shaw and D. R. Taylor for useful comments. We would also like to thank participants at: the Actuarial Science and Mathematical Finance group meetings at the Fields Institute, Toronto, July 2011; the Fourth International MiF Conference, Kruger National Park, South Africa, August 2011; and the Mathematical Finance seminars, Department of Mathematics, Ritsumeikan University, Japan, November 2011, for helpful remarks. P. A. Parbhoo acknowledges financial support from the Commonwealth Scholarship Commission in the United Kingdom (CSC), the National Research Foundation of South Africa (NRF) and the Programme in Advanced Mathematics of Finance at the University of the Witwatersrand.


\end{document}